\newtheorem{theorem}{Theorem}
\newtheorem{definition}{Definition}
\title{Optimal Placement of UAVs for Minimum Outage Probability}
\author{Maryam Shabanighazikelayeh and Erdem Koyuncu\thanks {The authors are with the Department of Electrical and Computer Engineering, University of Illinois at Chicago. Email: \{mshaba7, ekoyuncu\}@uic.edu.
    } \thanks{Part of this work \cite{koyuncupimrc19} was presented at the IEEE International Symposium on Personal, Indoor and Mobile Radio Communications in September 2019.} \thanks{This work was supported in part by the NSF Award CCF--1814717.}}
\begin{document}
\maketitle
\vspace{-20pt}
\begin{abstract}
We consider multiple unmanned aerial vehicles (UAVs) serving a density of ground terminals (GTs) as base stations. The objective is to minimize the outage probability of GT-to-UAV transmissions. Optimal placement of UAVs under different UAV altitude constraints and GT densities is studied. First, using a random deployment argument, a general upper bound on the optimal outage probability is found for any density of GTs and any number of UAVs. A matching lower bound is also derived to show that the optimal outage probability decays exponentially with the number of UAVs. Next, the structure of optimal deployments is studied when the common altitude constraint is large. For a wide class of GT densities, it is shown that all UAVs should be placed to the same location in an optimal deployment.  A design implication is that one can use a single multi-antenna UAV as opposed to multiple single-antenna UAVs without loss of optimality. This result is also extended to a practical variant of the Rician fading model recently developed by Azari et al. for UAV communications. 
Numerical deployment of UAVs in the centralized and practical distributed settings are carried out using the particle swarm optimization and modified gradient descent algorithms, respectively.
\end{abstract}

\begin{IEEEkeywords}
UAV-aided communications, optimal placement, outage probability, distributed algorithms.
\end{IEEEkeywords}
\section{Introduction}
\label{secIntro}
\subsection{Background and Motivation}
Unmanned aerial vehicles (UAVs) have recently been successively utilized in many diverse areas, including wireless communications \cite{zeng2016wireless}. Several applications have been mentioned for UAV-assisted communications including UAVs acting as base stations \cite{bor2016efficient, lyu1}, and relays \cite{uavrelay1, zhan2011wireless, zeng2016throughput, 8580994}. The ability of UAVs acting as data collection units \cite{pearre1,cyclical} makes them particularly appealing for Internet of Things applications \cite{zeng2016wireless,myIoT}. In fact, UAVs' anywhere, anytime relocation ability plays a significant role in their success in improving the performance of various wireless systems. In this context, UAVs can be utilized to improve network power-efficiency \cite{ouruav1, ouruav2}.

Despite their effectiveness in wireless communications, there are still several fundamental challenges in UAV-based systems that are yet to be resolved. For example, trajectory optimization and optimal placement of UAVs is an important problem in designing UAV-aided wireless communication systems. The problem of optimal placement for UAVs, even in a scenario where the location of users is known and fixed, is a nonconvex optimization problem whose dimensionality increases with the number of UAVs. 

There have been several studies on the placement/trajectory optimization of UAVs for different objectives. For example, static placement of UAVs as mobile base stations to maximize network coverage \cite{mozaffari1, mozaffari2, shak}, and energy efficient placement of UAVs subject to ground terminal (GT) coverage or rate constraints are studied \cite{zeng2017energy, xuhind}. The problem of interference-aware joint trajectory and power control of multiple UAVs have been explored \cite{anotherzhangpap}. In \cite{caisecure}, the authors consider the UAV trajectory planning problem under communication security constraints.  In \cite{zeng2016throughput}, the authors propose a mobile UAV relaying method and jointly optimize throughput, trajectory and temporal power allocation. Power efficient deployment of UAVs as relays is studied in \cite{koyuncuspawc} for centralized and distributed UAV selection scenarios. Trajectory control of UAVs for maximizing the worst terrestrial user's spectral efficiency is studied in \cite{aydin}.  UAV placement problem have also been studied in several other different contexts including coordinated multi-point transmissions \cite{8675440}, public safety communications \cite{merwaday2015uav}, user-in-the-loop scenarios \cite{sncdrones}, adhoc network connectivity \cite{han2009optimization}; for numerous other applications and formulations, we refer the reader to the tutorial paper \cite{shakhatreh2019unmanned}.


Most of the above works consider optimizing the power or rate efficiency of the system or aim at maximizing the coverage. On the other hand, especially if the UAVs are tasked to collect data from low-power GTs, outage probability becomes an important performance measure. There are some studies that provide an outage probability analysis and the corresponding optimal deployments of UAV-based wireless networks. In particular, an outage probability analysis of a cooperative UAV network over Nakagami-$m$ fading channel has been presented in \cite{4109829}. Joint trajectory and power control problem for a single-UAV network with the outage probability performance measure has been considered in \cite{8068199}. In \cite{7412759}, the authors provide the outage analysis of a single-UAV network with underlaid device-to-device links. Optimum density of drone small cells subject to an outage constraint has been analyzed in \cite{zhang2019spectrum}. 

The aforementioned existing studies on outage-optimized UAV placement either consider the case of a single UAV or mainly consider a numerical approach to UAV location optimization. Formal analytical results on the optimal placement of a general number of UAVs and the resulting outage performance is not available in general. This motivates our work. Our main contributions in this context is summarized in the following.

\subsection{Summary of Main Contributions}
 We consider the optimal placement of UAVs serving as base stations to a density of GTs with the goal of minimizing the outage probability of GT transmissions. We follow an analytical approach to find optimal placement of UAVs, and derive upper bound and lower bounds on the optimal outage probability. Using the outage probability as a performance metric results in a fundamentally different cost function as compared to earlier literature on UAVs. Correspondingly, we also obtain fundamentally different results. In particular, for a wide class of GT densities, we show that the achievable outage probability decays exponentially with the number of available UAVs. Moreover, if the UAV minimum altitude constraint is sufficiently large, we show that all UAVs should collapse to a single location in an optimal deployment. As a result, a single multi-antenna UAV can be used in place of multiple single-antenna UAVs without loss of optimality. We present our results for Rayleigh fading as well as Rician fading with $\theta$-dependent $K$-factor and path-loss exponent \cite{Azari2018}, where $\theta$ is the elevation angle. 

For numerical optimization of UAV locations, we consider centralized as well as practical distributed algorithms. In particular, we use the particle swarm optimization (PSO) algorithm \cite{psomethod} to find UAV deployments with the best possible performance. In the context of UAV-based communications, applications of PSO to objective functions that are different than ours can be found in \cite{shakhatreh2017efficient, kalantari2016number}. In our case, PSO requires a decision center that has access to the entire GT density function and thus may not be practical. We thus also explore practical gradient descent based solutions where each UAV communicates with only its neighboring UAVs to find its optimal location in an autonomous manner. Moreover, for our distributed algorithm to work correctly, each UAV only has to know the user density in its immediate vicinity. Numerical results demonstrate that even with small sensing and communication range, the final deployments found through our distributed UAV deployment algorithm has little to no loss in performance relative to the deployments found using the baseline PSO algorithm.

One of the earliest works to apply gradient descent ideas to distributed placement problems is \cite{cortes1}. Since then, several variations including limited-range node interactions have been studied \cite{cortes2005spatially, gusrialdi2008coverage, guojafsensorlim, 8302930, 7951029, cass1, cass2}. In particular, a problem formulation that is similar to ours appears in the context of optimal control of mobile sensors \cite{cass1, cass2}. The objective of these two studies is to find control laws that maximize the cumulative event detection probabilities of the sensors. However, \cite{cass1, cass2} do not present analytical results on the structure and performance of optimal sensor deployments, and consider a different objective function.



Part of this work \cite{koyuncupimrc19} was presented at IEEE PIMRC 2019. Compared to \cite{koyuncupimrc19}, we provide here the proofs of the two main theorems of our study. Section \ref{secuavplacementalgos} of this paper, which introduces the centralized and the distributed UAV placement algorithms, and Section \ref{secextrician}, which provides extensions to a variant of Rician fading  for UAV systems, are both entirely new. The numerical comparison of the centralized and the distributed algorithms in Section \ref{secdistuavplacementalgoperf} is also new.

\subsection{Organization of the Paper}

The rest of this paper is organized as follows: In Section \ref{secsystemmodel}, we introduce the system model. In Section \ref{secoutfordifferentn}, we study the behavior of the minimum-possible outage probability with respect to the number of UAVs at a fixed altitude. In Section \ref{secoutfordifferenth}, we study optimal deployments of UAVs for different altitude constraints. In Section \ref{secuavplacementalgos}, we present our centralized and distributed UAV placement algorithms. In Section \ref{secextrician}, we extend our results to  Rician fading. In Section \ref{secnumerical}, we present numerical simulation results. Finally, in Section \ref{secconclusions}, we draw our main conclusions and discuss future work. Some of the technical proofs are provided in the appendices.
\section{System Model}
\label{secsystemmodel}
We consider multiple GTs at zero altitude and $n \geq 1$ UAVs. Mathematically, we assume that all the GTs are located at $\mathbb{R}^d$, where $d\in\{1,2\}$. The case $d=1$ is relevant when the GTs are located on a straight line on the ground, e.g. on a highway. We also assume that the GTs are distributed on $\mathbb{R}^d$ according to some probability density function $f:\mathbb{R}^d\rightarrow\mathbb{R}$.

In this paper, we consider a model where the UAVs are subject to a common minimum altitude constraint $h \geq 0$ due to governmental regulations or environmental constraints. In such a scenario, given that all the GTs are located at zero altitude, decreasing the altitude of any UAV decreases the GTs' access distance to the UAV, resulting in a better overall performance. We thus assume that all UAVs are located at the fixed altitude $h$. 

\begin{figure}
  \begin{center}
  \vspace{-20pt}
    \includegraphics[width=0.5\textwidth]{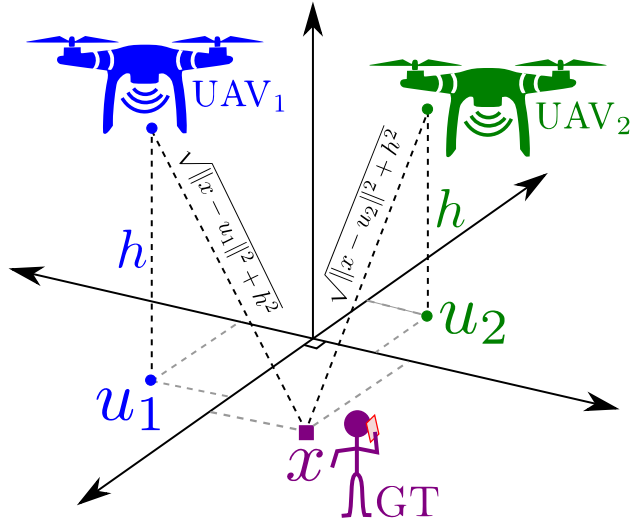}
  \end{center}
  \caption{ A network of two UAVs serving a GT.}
  \label{systemmodel}
\end{figure}

Let us now consider a GT at location $x\in\mathbb{R}^d$. Also, let $u_i\in\mathbb{R}^d,\,i=1,\ldots,n$ denote the locations of the UAVs projected to the ground (Hence, the actual location of UAV $i$ is given by $[u_i\,\,h]\in\mathbb{R}^{d+1}$). The GT can communicate with any one of the UAVs, all of which act as service providers. Fig. \ref{systemmodel} illustrates the system model for the special case of $n=d=2$.


Suppose that the GT wishes to communicate with rate $\rho$ bits/sec/Hz and transmits with fixed power $P$. Note that the distance between the GT at $x$ and UAV $i$ is given by $(\|x-u_i\|^2+h^2)^{\frac{1}{2}}$. Let $N_0$ be the noise power, the capacity of the channel between the GT and UAV $i$ is thus 
\begin{align}
C(x,u_i) \triangleq \log_2\left(1+\frac{A P |\beta_i|^2 }{N_0 (\|x-u_i\|^2 + h^2)^{\frac{r}{2}}}\right) \mbox{bits/sec/Hz}, 
\end{align}
where $r$ is the path loss exponent, $\beta_i$ is the fading coefficient between the GT and UAV $i$, and $A$ is a constant that depends on parameters such as operation frequency and antenna gain \cite{Azari2018}.

We assume $\beta_1,\ldots,\beta_n$ are independent and identically distributed as circularly-symmetric complex Gaussian random variable with unit variance. This results in the Rayleigh fading model, which is valid in a rich scattering environment, and is widely utilized in existing work on UAV communications. Extensions to other channel models that incorporate different practicalities of UAV communication systems will be provided later on. The GT transmission will be successful if the capacity of the channel carrying the GT's message is at least $\rho$. The probability of a failed transmission from a GT at $x$ to UAV $i$ is then given by the outage probability
\begin{align}
 P_{O_i}(x,u_i) & \triangleq \mathrm{P}\left(C(x,u_i) \leq \rho\right) \\
\label{pqwepoqiwepoqiweq} & =  \mathrm{P}\left( |\beta_i|^2 \leq \lambda (\|x-u_i\|^2 + h^2)^{\frac{r}{2}} \right).
\end{align}
Where $\lambda = \frac{N_0(2^{\rho}-1)}{A P}$. In order to evaluate the outage probability, we note the well-known fact that for a Rayleigh fading channel, $|\beta_i|^2$ is an exponential random variable with cumulative distribution function $F_{|\beta_i|^2}(x) = 1- e^{-x}$. Therefore, we obtain
\begin{align}
P_{O_i}(x,u_i)  =  1 - g(x,u_i),
\end{align}
where
\begin{align}
g(x,u_i) \triangleq \exp\left( -\lambda (\|x-u_i\|^2 + h^2)^{\frac{r}{2}} \right).
\end{align}
The transmitted GT data cannot be received by any one of the UAVs if an outage event occurs at all the UAVs. The overall outage probability of the entire system given the deployment $U = [u_1 \cdots u_n]$ of UAVs is thus
\begin{align}
P_{O}(x,U)  =  \prod_{i=1}^n \left[1 - g(x,u_i) \right].
\end{align}
Averaging out the GT density function $f$, we arrive at our objective function
\begin{align}
\label{objectivefunc}
P_{O}(U)  =  \int \prod_{i=1}^n \left[1 - g(x,u_i) \right] f(x)\mathrm{d}x.
\end{align}
Throughout the paper, all unspecified integration domains are $\mathbb{R}^d$. The problem is then to find optimal UAV locations $U^{\star} \triangleq \arg\min_U P_O(U)$ that minimize the outage probability, and the corresponding minimum-possible outage probability $P_O(U^{\star})$. In the following, we first consider the behavior of $P_O(U^{\star})$ for different values of the number of UAVs $n$ for a fixed UAV altitude $h$. We then consider the behavior of $P_O(U^{\star})$ as a function of the constraint $h$ on the UAV altitudes. 

\section{Optimal Outage Probability for Different Number of UAVs}
\label{secoutfordifferentn}
In this section, we study how the optimal outage probability varies with respect to the number of UAVs for a fixed $h$, and the corresponding optimal deployments.  In this context, finding the exact minimizers of (\ref{objectivefunc}) appears to be a hopeless task for a general $n$. Even the case of a single UAV $n=1$ results in a non-trivial optimization problem, for which there appears to be a no closed-form solution in general. Still, one can find the optimal positioning of a single UAV for unimodal densities that are defined in the following. 
\begin{definition}
\label{unimoddef}
We call a univariate density function $f:\mathbb{R}\rightarrow\mathbb{R}$ unimodal with center $\mu\in\mathbb{R}$ if $f$ is non-decreasing on $(-\infty,\mu]$, $f$ is non-increasing on $[\mu,\infty)$, and $f(\mu+c) = f(\mu-c),\,\forall c\in\mathbb{R}$. We call a bivariate density function $f:\mathbb{R}^2\rightarrow\mathbb{R}$ unimodal with center $\mu = [\begin{smallmatrix} \mu_1 \\ \mu_2 \end{smallmatrix}]\in\mathbb{R}^2$ if for all $y$, the function $f_1(x) = f(x,y)$ is unimodal with center $\mu_1$, and for all $x$, the function $f_2(y) = f(x,y)$ is unimodal with center $\mu_1$.
\end{definition}
A circularly-symmetric Gaussian random vector or uniform random vectors are both examples of unimodal densities. We have the following theorem.
\begin{theorem}
\label{oneuavtheorem}
Let $n=1$, and suppose that $f$ is unimodal with center $\mu$. Then, for any $h \geq 0$, an optimal deployment of the single UAV is given by $U^{\star} = \mu$.
\end{theorem}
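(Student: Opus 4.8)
The plan is to turn the minimization of $P_O$ into a maximization, prove a one-dimensional rearrangement lemma, and then lift it to $d=2$ one coordinate at a time. Since $P_O(u) = 1 - G(u)$ with $G(u) \triangleq \int g(x,u)\,f(x)\,\mathrm{d}x$, minimizing $P_O$ is equivalent to maximizing $G$. Writing $g(x,u) = \phi(\|x-u\|^2)$ with $\phi(t) \triangleq \exp\!\bigl(-\lambda(t+h^2)^{r/2}\bigr)$ a decreasing function on $[0,\infty)$, the theorem reduces to the statement that $u=\mu$ maximizes $u \mapsto \int \phi(\|x-u\|^2)\,f(x)\,\mathrm{d}x$.

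The heart of the argument is the case $d=1$. After translating so that $\mu = 0$ (so $f$ is even and non-increasing on $[0,\infty)$), I would establish the exact identity, valid for every $u>0$,
\[
G(0) - G(u) \;=\; \int_{-\infty}^{u/2} \bigl[\phi(x^2) - \phi((x-u)^2)\bigr]\,\bigl[f(x) - f(u-x)\bigr]\,\mathrm{d}x ,
\]
which follows by splitting $\int_{\mathbb R}\bigl[\phi(x^2)-\phi((x-u)^2)\bigr]f(x)\,\mathrm{d}x$ at $x=u/2$ and applying the reflection $x \mapsto u-x$ on the half-line $[u/2,\infty)$; this reflection simply interchanges $x^2$ and $(x-u)^2$. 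On $(-\infty,u/2)$ one checks $(x-u)^2 > x^2$ (so the first bracket is $\ge 0$ since $\phi$ is decreasing) and $|x| \le |u-x|$ (so the second bracket is $\ge 0$ by unimodality of $f$). Hence $G(0) \ge G(u)$, and evenness of $G$ (substitute $x\mapsto-x$ and use that $f$ is even) extends this to $u<0$. Equivalently, $G$ is the convolution of the two symmetric unimodal functions $\phi(\,\cdot^2)$ and $f$ and is thus itself symmetric unimodal; the displayed identity just makes this self-contained.

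For $d=2$ I would move from an arbitrary $u=(u_1,u_2)$ to $\mu=(\mu_1,\mu_2)$ coordinatewise, invoking Definition \ref{unimoddef}. Keeping the second coordinate fixed at $u_2$: for a.e.\ $x_2$, the slice $x_1\mapsto f(x_1,x_2)$ is univariate unimodal with center $\mu_1$, and $s\mapsto \phi\!\bigl(s+(x_2-u_2)^2\bigr)$ is decreasing, so applying the one-dimensional lemma inside the $x_1$-integral and then integrating over $x_2$ gives $G(u_1,u_2)\le G(\mu_1,u_2)$. Running the same step in the $x_2$-variable, now with the first coordinate frozen at $\mu_1$, gives $G(\mu_1,u_2)\le G(\mu_1,\mu_2)$. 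Chaining the two inequalities yields $G(u)\le G(\mu)$ for all $u$, i.e. $P_O(\mu)\le P_O(u)$.

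The one genuinely delicate point --- and the step I expect to require the most care --- is the one-dimensional rearrangement inequality: justifying the split-and-reflect identity, pinning down the sign of its integrand, and checking that every integral in sight is finite (immediate from $0<\phi\le 1$ and the integrability of $f$ and of a.e.\ of its one-dimensional slices). The reduction to maximizing $G$, the evenness of $G$, and the coordinatewise passage to $d=2$ are then routine bookkeeping.
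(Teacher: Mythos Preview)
Your proof is correct and follows essentially the same approach as the paper: both reduce to maximizing $G(u)=\int g(x,u)f(x)\,\mathrm{d}x$, symmetrize around the midpoint between $\mu$ and the candidate point (the paper via the shift $t=x-\mu-\alpha/2$ and the oddness of $\zeta$, you via the reflection $x\mapsto u-x$ about $u/2$), check the two sign conditions using unimodality, and then lift to $d=2$ coordinate by coordinate. Your formulation is a touch more streamlined---the single observation $|x|\le|u-x|$ on $(-\infty,u/2)$ replaces the paper's further split of the integration domain into $(-\infty,-\alpha/2)$ and $[-\alpha/2,0]$---and the remark that $G$ is a convolution of two symmetric unimodal functions is a nice conceptual gloss the paper does not make.
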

\begin{proof}
See Appendix \ref{oneuavtheoremproof}.
\end{proof}
This verifies the intuition that if the density is ``symmetric'' around $\mu$, then the optimal deployment of the single UAV should be $\mu$.

For the case of a general $n \geq 1$, we provide general upper and lower bounds on the outage probability that hold for any number of UAVs and different GT densities. The bounds allow us to obtain estimates on the achievable outage probability without going through the time-consuming numerical optimization methods. They also allow us to obtain the asymptotic behavior of the outage probability as $n\rightarrow\infty$.

We first present a general upper bound via the following theorem. As defined in Section \ref{secsystemmodel}, let $P_O(U^{\star})$ denote the minimum possible outage probability provided by an optimal deployment. Let $X$ represent a random variable whose probability density equals the GT density $f$.
\begin{theorem}
\label{randdeploy}
The following upper bound holds for arbitrary random variables $U_1,\ldots,U_n$:
\begin{align}
\label{qoiweuoqiweuqw}
P_O(U^{\star}) \leq \mathrm{E}_{X,U_1,\ldots,U_n}\left[\prod_{i=1}^n (1 -g(X,U_i)) \right].
\end{align}
In particular, suppose $U_1,\ldots,U_n$ have the same density as some random variable $U$, are mutually independent, and also independent of $X$. Then, we have
\begin{align}
\label{qoiweuoqiweuqw2}
P_O(U^{\star}) \leq \int \left(\mathrm{E}_{U}\left[ 1- g(x,U) \right] \right)^n f(x)\mathrm{d}x.
\end{align}
\end{theorem}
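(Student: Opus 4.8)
The plan is a short random-deployment (probabilistic averaging) argument. The crucial elementary observation is that $P_O(U^\star)=\min_U P_O(U)$, so for \emph{every} fixed tuple $u=[u_1\,\cdots\,u_n]$ of projected UAV locations we have $P_O(U^\star)\le P_O(u)$; in particular there is no need for the minimum to be attained, and the same inequality would hold with $P_O(U^\star)$ replaced by $\inf_U P_O(U)$. Now let $U_1,\ldots,U_n$ be arbitrary (jointly distributed) $\mathbb{R}^d$-valued random variables. Averaging the pointwise inequality $P_O(U^\star)\le P_O(u)$ over the law of $(U_1,\ldots,U_n)$ gives $P_O(U^\star)\le \mathrm{E}_{U_1,\ldots,U_n}[P_O(U_1,\ldots,U_n)]$.

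Next I would unfold $P_O$ using (\ref{objectivefunc}). Letting $X$ have density $f$ and be independent of $U_1,\ldots,U_n$, we can write $P_O(U_1,\ldots,U_n)=\mathrm{E}_X\bigl[\prod_{i=1}^n(1-g(X,U_i))\bigr]$, and substituting this into the bound above yields
\[
P_O(U^\star)\le \mathrm{E}_{U_1,\ldots,U_n}\mathrm{E}_X\Bigl[\textstyle\prod_{i=1}^n(1-g(X,U_i))\Bigr]=\mathrm{E}_{X,U_1,\ldots,U_n}\Bigl[\textstyle\prod_{i=1}^n(1-g(X,U_i))\Bigr],
\]
which is (\ref{qoiweuoqiweuqw}). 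The only point that needs justification is the interchange of the two expectations, which follows from Tonelli's theorem: the integrand $(x,u_1,\ldots,u_n)\mapsto\prod_{i=1}^n(1-g(x,u_i))$ is measurable (each $g$ is continuous in its arguments) and nonnegative, since every factor lies in $[0,1)$ because $g(x,u_i)=\exp\bigl(-\lambda(\|x-u_i\|^2+h^2)^{r/2}\bigr)\in(0,1]$; hence the repeated integrals coincide and are finite (bounded by $1$) regardless of the order.

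For the sharper bound (\ref{qoiweuoqiweuqw2}) I would specialize (\ref{qoiweuoqiweuqw}) to the case where $U_1,\ldots,U_n$ are i.i.d.\ with the law of $U$ and independent of $X$. Conditioning on $X=x$, the mutual independence of the $U_i$ factorizes the expectation of the product, $\mathrm{E}_{U_1,\ldots,U_n}\bigl[\prod_{i=1}^n(1-g(x,U_i))\bigr]=\bigl(\mathrm{E}_U[1-g(x,U)]\bigr)^n$, and integrating this against $f(x)$ gives (\ref{qoiweuoqiweuqw2}). I do not anticipate any real obstacle: this is essentially a one-line ``a random deployment cannot beat the optimum'' bound, and the work is bookkeeping plus the routine Tonelli justification noted above.
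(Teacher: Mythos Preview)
Your proof is correct and follows essentially the same random-deployment argument as the paper: the optimum is no worse than any fixed deployment, hence no worse than the average over a random deployment, and the i.i.d.\ case factorizes. You have simply spelled out the bookkeeping (Tonelli, nonnegativity, the $[0,1]$ bound) more carefully than the paper does.
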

\begin{proof}
The proof (\ref{qoiweuoqiweuqw}) relies on the following random deployment argument: We assume that the location of UAV $i$ is the random variable $U_i$. The expected outage probability with this random deployment scenario is the right side of (\ref{qoiweuoqiweuqw}). It is guaranteed that there exists at least one deployment that achieves the expected performance, and this proves (\ref{qoiweuoqiweuqw}). Result (\ref{qoiweuoqiweuqw2})  follows immediately from (\ref{qoiweuoqiweuqw}) by independence of random variables. 
\end{proof}
In particular, the theorem shows via (\ref{qoiweuoqiweuqw2}) that in general, the outage probability decays at least exponentially with the number of available UAVs provided that the support of $f$ is finite. In fact, setting $U = 0$ to be deterministic, we have $P_O(U^{\star}) \leq \mathrm{E}[(1- g(X,0))^n] \leq \mathrm{E}[(1- e^{-\lambda\|X\|^r})^n]$. Given that the support of $f$ is contained on a ball with radius $R$, we then have $P_O(U^{\star}) \leq (1- e^{-\lambda R^r})^n$, proving the exponential rate of decay as desired. On the other hand, the upper bound in (\ref{qoiweuoqiweuqw2}) is not tight in general for a given finite $n$. An interesting problem in this context that we shall leave as future work is the optimization of the upper bound (\ref{qoiweuoqiweuqw}) with respect to the densities $U_1,\ldots,U_n,X$ for tighter upper bounds.

We now provide lower bounds. A very simple lower bound that holds for arbitrary densities is the following:
\begin{theorem}
\label{lowerboundone}We have
\begin{align}
\label{triviallb}
 P_{O}(U^{\star}) \geq (1-e^{-\lambda h^r})^n.
 \end{align}
\end{theorem}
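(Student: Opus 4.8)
The plan is to establish the bound pointwise in $x$ and then integrate. The starting observation is that for every GT location $x$ and every UAV position $u_i$, the squared link distance satisfies $\|x-u_i\|^2 + h^2 \geq h^2$. Since the path-loss exponent $r$ is nonnegative, the map $t \mapsto t^{r/2}$ is non-decreasing on $[0,\infty)$, so $(\|x-u_i\|^2+h^2)^{r/2} \geq h^r$. Because $\lambda = \frac{N_0(2^\rho-1)}{AP} \geq 0$, exponentiating (and using that $e^{-t}$ is decreasing) yields
\begin{align}
g(x,u_i) = \exp\!\left(-\lambda(\|x-u_i\|^2+h^2)^{r/2}\right) \leq e^{-\lambda h^r},
\end{align}
and hence $1 - g(x,u_i) \geq 1 - e^{-\lambda h^r}$ for each $i$.

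Next I would take the product over $i = 1,\ldots,n$. Since $1 - e^{-\lambda h^r} \geq 0$ and each factor $1 - g(x,u_i)$ dominates it, we get
\begin{align}
\prod_{i=1}^n \left[1 - g(x,u_i)\right] \geq \left(1 - e^{-\lambda h^r}\right)^n
\end{align}
for every $x \in \mathbb{R}^d$ and every deployment $U = [u_1 \cdots u_n]$. Integrating this inequality against the density $f$, and using $\int f(x)\,\mathrm{d}x = 1$, gives $P_O(U) \geq (1 - e^{-\lambda h^r})^n$ for all $U$; in particular this holds for $U = U^\star$, which is the claim.

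There is essentially no obstacle here: the result is a direct consequence of the monotonicity of the link capacity in the access distance, combined with the hard floor $h$ on that distance imposed by the altitude constraint. The only point worth stating explicitly is that the bound is uniform in $U$, so it survives the minimization over deployments. It is worth remarking that this lower bound matches the exponential decay rate exhibited by the upper bound in Theorem~\ref{randdeploy}: both behave like $c^n$ for a constant $c \in (0,1)$ (here $c = 1 - e^{-\lambda h^r}$), which is what lets one conclude that $P_O(U^\star)$ decays exactly exponentially in $n$.
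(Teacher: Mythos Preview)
Your proof is correct and follows exactly the same approach as the paper: both arguments rest on the pointwise estimate $g(x,u_i) \leq e^{-\lambda h^r}$, which is then substituted into the product in (\ref{objectivefunc}) and integrated against $f$. You have simply spelled out the monotonicity steps and the uniformity in $U$ more explicitly than the paper's one-line proof.
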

\begin{proof}
The bound follows once we use the estimate $g(x,u_i) = \exp( -\lambda (\|x-u_i\|^2 + h^2)^{\frac{r}{2}} ) \leq \exp( -\lambda  h^r )$ 
 in  (\ref{objectivefunc}).
\end{proof}
The bound (\ref{triviallb}) is unfortunately not useful for the case of unmanned ground vehicles (UGVs) $h=0$. For this scenario, using somewhat more sophisticated methods, one can still obtain a lower bound with an exponential rate of decay.
\begin{theorem}
\label{triviallbh0}
Let $h=0$. There is a constant $c \geq 0$ such that for any $n$, we have $P_{O}(U^{\star}) \geq c^n$. 
\end{theorem}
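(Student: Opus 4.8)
The plan is to prove the stronger statement that $P_O(U)\ge c^n$ for \emph{every} deployment $U$ (not only for $U^{\star}$), with a single constant $c\in(0,1)$ depending only on $f,\lambda,r,d$; the claim then follows at once from $P_O(U^{\star})=\inf_U P_O(U)$. Throughout, $g(x,u_i)=\exp(-\lambda\|x-u_i\|^r)$ since $h=0$, so that $P_O(U)=\int\prod_{i=1}^n\big(1-g(x,u_i)\big)f(x)\,\mathrm{d}x$.

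The main obstacle is that $f$ may be unbounded (a tall spike), in which case a naive estimate based on Jensen's inequality or on $\mathrm{E}_X[-\log(1-g(X,u_i))]$ breaks down, since a single UAV placed at a spike of $f$ can make that expectation infinite, even though $P_O(U)$ itself stays bounded away from $0$ at the correct rate. The remedy, and first step, is to truncate $f$ in amplitude. Since $f\in L^1$, dominated convergence lets us fix $T\ge 1$ with $\int(f-T)^+\,\mathrm{d}x\le\tfrac12$, so that $m:=\int\min(f,T)\,\mathrm{d}x\ge\tfrac12$ and $\tilde f:=\min(f,T)/m$ is a probability density with $\tilde f\le 2T$. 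From $f\ge m\tilde f$ pointwise we obtain $P_O(U)\ge m\int\prod_i(1-g(x,u_i))\,\tilde f(x)\,\mathrm{d}x\ge\tfrac12\,\tilde I(U)$, where $\tilde I(U):=\mathrm{E}_{\tilde X}\big[\prod_i(1-g(\tilde X,u_i))\big]$ and $\tilde X\sim\tilde f$. (Only boundedness of $\tilde f$ is used; no spatial truncation of $f$ is needed, and the UAVs are never assumed distinct or spread out, so the bound is genuinely worst‑case over $U$.)

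Next I would reduce $\tilde I(U)$ to a tail bound for a sum of logarithms. Using the elementary inequality $1-e^{-t}\ge\tfrac12\min(t,1)$ for $t\ge0$, we get $1-g(x,u_i)\ge\tfrac12 g_i(x)$ with $g_i(x):=\min(\lambda\|x-u_i\|^r,1)\in[0,1]$, hence $\tilde I(U)\ge 2^{-n}\,\mathrm{E}_{\tilde X}\big[\prod_i g_i(\tilde X)\big]$. For any $L>0$, bounding the product below by $e^{-nL}$ on the event $\{\sum_i(-\log g_i(\tilde X))\le nL\}$ gives $\mathrm{E}_{\tilde X}\big[\prod_i g_i(\tilde X)\big]\ge e^{-nL}\,\mathrm{P}_{\tilde X}\big[\textstyle\sum_i(-\log g_i(\tilde X))\le nL\big]$.

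Finally I would control this probability by a uniform first‑moment bound. The key observation is that $-\log g_i(x)=\big(r\log(1/\|x-u_i\|)-\log\lambda\big)^+$ vanishes whenever $\|x-u_i\|\ge\lambda^{-1/r}$ and is otherwise at most $r\log(1/\|x-u_i\|)+|\log\lambda|$, which is integrable over $\{\|y\|\le\lambda^{-1/r}\}\subset\mathbb{R}^d$ for $d\in\{1,2\}$ (indeed for any $d$, since $\log(1/\|y\|)$ is locally integrable). Hence, using $\tilde f\le 2T$, $\mathrm{E}_{\tilde X}[-\log g_i(\tilde X)]\le 2T\!\int_{\|y\|\le\lambda^{-1/r}}\!\big(r\log(1/\|y\|)+|\log\lambda|\big)\,\mathrm{d}y=:2TC_1<\infty$, where $C_1$ depends only on $\lambda,r,d$ and, crucially, not on $u_i$ or $n$. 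Summing over $i$ and invoking Markov's inequality with the choice $L:=4TC_1$ yields $\mathrm{P}_{\tilde X}\big[\sum_i(-\log g_i(\tilde X))\le nL\big]\ge\tfrac12$. Chaining all the inequalities gives $P_O(U)\ge\tfrac14\,(2e^{L})^{-n}\ge c^n$ for every $n\ge1$ with, for instance, $c:=\tfrac1{16}e^{-L}=\tfrac1{16}e^{-4TC_1}\in(0,1)$, which proves the theorem (in fact with a strictly positive $c$, slightly stronger than the stated $c\ge0$).
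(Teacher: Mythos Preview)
Your argument is correct, and in fact slightly more general than what the paper proves, since you impose no regularity on $f$ beyond integrability. The route, however, is genuinely different from the paper's. The paper proceeds geometrically: for the uniform density on $[0,M]$ in dimension one it orders the UAVs, partitions $[0,M]$ by the UAV positions into intervals $I_j$, shrinks each $I_j$ to its middle third $I_j'$ so that every point of $I_j'$ is at distance at least $|I_j|/3$ from every UAV, and then combines the resulting lower bounds via Jensen's inequality (using convexity of $x\mapsto x(1-e^{-\lambda x^r/3^r})^n$ on a suitable range). The extension to $d=2$ and to non-uniform $f$ is only sketched, the latter by restricting to a box on which $f$ is bounded below by a positive constant. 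Your approach is probabilistic rather than geometric: truncate $f$ in amplitude to obtain a bounded density, replace $1-e^{-t}$ by $\tfrac12\min(t,1)$, and then control $\mathrm{E}\big[\prod_i g_i\big]$ via Markov's inequality on $\sum_i(-\log g_i)$, exploiting the single uniform estimate $\sup_u\int_{\|y\|\le\lambda^{-1/r}}(r\log(1/\|y\|)+|\log\lambda|)\,\mathrm{d}y<\infty$. The paper's method yields more explicit constants tied to the geometry of the support and makes the role of inter-UAV spacing transparent; your method avoids the case analysis over dimensions and the implicit assumption that $f$ is bounded below on some set of positive measure, and it extends verbatim to any $d\ge1$.
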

\begin{proof}
See Appendix \ref{proofoftriviallbh0}.
\end{proof}

Theorems \ref{randdeploy}, \ref{lowerboundone}, and \ref{triviallbh0} together show that for any fixed $h$, the best-possible decay of the outage probability is exponential with respect to the number of UAVs.

\section{Optimal Deployments for Different UAV Altitude Constraints}
\label{secoutfordifferenth}
We now consider the behavior of the optimal UAV deployments and the corresponding optimal outage probabilities for different UAV altitude constraints $h$. By Theorem \ref{oneuavtheorem}, we have shown that, for one UAV and a unimodal GT density, the optimal UAV location is at the center of the GT density. For example, for a uniform distribution on $[0,1]$, the optimal UAV location is at $0.5$ for every UAV altitude. With more UAVs, one would expect the UAVs to be evenly distributed to $[0,1]$ to provide an even coverage of the GT domain $[0,1]$. This is the case, indeed, if one wishes to place the UAVs so as to minimize the average GT power consumption to guarantee zero-outage transmission \cite{ouruav2}, or in many existing studies on UAV location optimization \cite{zeng2016wireless}. 

A counterintuitive phenomenon occurs in the case of our objective function of minimizing the outage probability. In fact, during our numerical experiments, we have observed that after a certain finite altitude, all UAVs collapse to a single location in an optimal deployment. Here, we prove a slightly weaker claim that, in an optimal deployment, all UAVs converge to a common location as $h\rightarrow\infty$. We write $f(x) \sim g(x)$ as a shorthand notation for the asymptotic equality $\lim_{x\rightarrow\infty} f(x)/g(x) = 1$. 

\begin{theorem}
\label{asymptottheorem}
Let
\begin{align}
u^{\star}(h) \triangleq \arg\max_u \int g(x,u)f(x)\mathrm{d}x.
\end{align}
As $h\rightarrow\infty$, we have the asymptotic equality
\begin{align}
\label{qiwepoqwueqw}
\left(1 - P_O(U^{\star})\right) \sim n  \int g(x,u^{\star}(h))f(x)\mathrm{d}x.
\end{align}
An optimal deployment $U^{\star}$ that achieves (\ref{qiwepoqwueqw}) is where all the $n$ UAVs are located at $u^{\star}(h)$ for a given $h$.
\end{theorem}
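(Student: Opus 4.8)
The plan is to sandwich $1-P_O(U^{\star})$ between two bounds that both coincide, as $h\to\infty$, with the claimed right-hand side. Throughout, write $G(u)\triangleq\int g(x,u)f(x)\mathrm{d}x$, so that (\ref{qiwepoqwueqw}) reads $1-P_O(U^{\star})\sim nG(u^{\star}(h))$, and record the pointwise bound $g(x,u)=\exp(-\lambda(\|x-u\|^2+h^2)^{r/2})\le\exp(-\lambda h^r)$, whence $G(u)\le\exp(-\lambda h^r)$ for every $u$ and in particular $G(u^{\star}(h))\to 0$. The structural fact driving the argument is that when the product $\prod_{i=1}^n(1-g(x,u_i))$ in (\ref{objectivefunc}) is expanded, every monomial containing two or more factors $g$ is at most $\exp(-2\lambda h^r)$ pointwise, hence contributes a factor $O(\exp(-\lambda h^r))$ smaller than the linear term $\sum_i g(x,u_i)$; only the first-order term survives in the limit, and it is maximized by collapsing all UAVs onto the single maximizer $u^{\star}(h)$.

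For the upper bound I would use the union-type inequality $1-\prod_{i=1}^n(1-g_i)\le\sum_{i=1}^n g_i$, valid for $g_i\in[0,1]$ (a one-line induction). Integrating against $f$ gives $1-P_O(U)\le\sum_{i=1}^n G(u_i)\le nG(u^{\star}(h))$ for \emph{every} deployment $U$, the last step being the definition of $u^{\star}(h)$; optimizing over $U$ yields $1-P_O(U^{\star})\le nG(u^{\star}(h))$ with no error term.

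For the matching lower bound I would simply evaluate (\ref{objectivefunc}) at the candidate deployment $U_0$ having all $n$ UAVs at $u^{\star}(h)$. With $g^{\star}(x)\triangleq g(x,u^{\star}(h))$ one has $1-P_O(U_0)=\int[1-(1-g^{\star}(x))^n]f(x)\mathrm{d}x$, and the elementary inequality $1-(1-t)^n\ge nt-\binom{n}{2}t^2$ for $t\ge0$ gives $1-P_O(U_0)\ge nG(u^{\star}(h))-\binom{n}{2}\int g^{\star}(x)^2 f(x)\mathrm{d}x\ge nG(u^{\star}(h))\bigl(1-\tfrac{n-1}{2}\exp(-\lambda h^r)\bigr)$, bounding one factor $g^{\star}(x)\le\exp(-\lambda h^r)$ in the second-order term. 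Since $U^{\star}$ is optimal, $1-P_O(U^{\star})\ge 1-P_O(U_0)$, so combining with the upper bound squeezes $1-P_O(U^{\star})$ between $nG(u^{\star}(h))\bigl(1-\tfrac{n-1}{2}\exp(-\lambda h^r)\bigr)$ and $nG(u^{\star}(h))$; dividing by $nG(u^{\star}(h))>0$ (positive since, e.g., $G(u^{\star}(h))\ge G(0)>0$) and letting $h\to\infty$ proves (\ref{qiwepoqwueqw}). The same two inequalities applied to $U_0$ alone give $1-P_O(U_0)\sim nG(u^{\star}(h))$, so the all-at-$u^{\star}(h)$ deployment is asymptotically optimal and realizes (\ref{qiwepoqwueqw}), which is the final assertion of the theorem.

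The computation is short; the only obstacle is conceptual, namely recognizing that in the high-altitude regime the per-UAV success contributions become essentially additive because cooperative/overlap gains are doubly exponentially suppressed in $h^r$, so that maximizing $\sum_i G(u_i)$ --- which forces every UAV to $u^{\star}(h)$ --- is asymptotically the right objective. One should note only that $u^{\star}(h)$ may move with $h$ and need not converge; this is harmless, since for each fixed $h$ the bounds above use the current maximizer (or $\sup_u G(u)$, should the maximum fail to be attained), and no regularity of $h\mapsto u^{\star}(h)$ is needed.
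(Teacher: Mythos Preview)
Your argument is correct and in fact slightly sharper than the paper's own proof. Both proofs hinge on the same pointwise estimate $g(x,u)\le e^{-\lambda h^r}$, but the decompositions differ. The paper expands $\prod_i(1-g(x,u_i))$ fully by inclusion--exclusion and controls \emph{all} terms of order $k\ge 2$ by the crude bound $2^n e^{-2\lambda h^r}$, obtaining the two-sided estimate $\bigl|P_O(U^{\star})-(1-nG(u^{\star}(h)))\bigr|\le 2^n e^{-2\lambda h^r}$, and then invokes dominated convergence. You instead use the union bound $1-\prod_i(1-g_i)\le\sum_i g_i$ to get the upper bound $1-P_O(U^{\star})\le nG(u^{\star}(h))$ with \emph{no} additive error, and for the lower bound you evaluate at the collapsed deployment and keep only the quadratic Bonferroni correction. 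This buys you a cleaner conclusion: your ratio is squeezed between $1-\tfrac{n-1}{2}e^{-\lambda h^r}$ and $1$, so the asymptotic equivalence follows immediately without any appeal to dominated convergence or to a lower bound on $G(u^{\star}(h))$ comparable to $e^{-\lambda h^r}$ (which the paper's route implicitly needs in order to conclude $2^n e^{-2\lambda h^r}=o(nG(u^{\star}(h)))$). One minor remark: the inequality $1-(1-t)^n\ge nt-\binom{n}{2}t^2$ is stated ``for $t\ge 0$'' but is in general only valid for $t\in[0,1]$; since $g^{\star}(x)\in(0,1]$ this is exactly the range you use, so the slip is harmless.
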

\begin{proof}
See Appendix \ref{proofofasymptottheorem}.
\end{proof}

The theorem provides a precise asymptotic formula for the optimal outage probability as $h\rightarrow\infty$ and the optimal UAV locations that achieve this outage probability. In particular, placing all UAVs to the point $\lim_{h\rightarrow\infty} u^{\star}(h)$ is asymptotically optimal.

Intuition suggests that when one has a uniform distribution of users over an area, and multiple base stations to serve these users, the optimal deployment of base stations should be more or less uniform over the area. From this perspective, the conclusion of Theorem \ref{asymptottheorem} is very counter-intuitive: It suggest that in an optimal deployment, all base stations (UAVs) should collapse to a single location. Note that in an ordinary network consisting of base stations and users, by design, every user is assigned to one and only one base station. One should then distribute the base stations uniformly over the area so that the access distance of each user is minimized, i.e. every user on the area can find one base station nearby. On the other hand, in our scenario, every base station serves every user, as the transmitted signal of a user can be decoded by any UAV that is not in outage. Moreover, when the altitude is high, the probability of outage becomes much higher. In such a scenario, it makes more sense for a user to have \textit{all UAVs within reasonable distance} (and thus benefit from the spatial diversity provided by the UAVs) rather than to have \textit{one UAV within close distance}. The requirement of having all UAVs within reasonable distance for all users means that all UAVs should be located more or less in the area center.


An important design implication of Theorem \ref{asymptottheorem} is that, for a large UAV altitude constraint, one can just use a single UAV with multiple antennas (and use selection diversity), as compared to multiple UAVs with a single antenna. In fact, since it is physically impossible to place multiple single-antenna UAVs to a single location, one \textit{must} use a multi-antenna UAV to minimize the outage probability for a given maximum number of selectable UAV antennas.

A special case is when the density $f$ is unimodal with center $\mu$. In such a scenario, we have $u^{\star}(h) = \mu,\,\forall h$, and the UAV locations converge to $\mu$ as $h\rightarrow\infty$ by Theorem \ref{asymptottheorem}. For example, if $f$ is one-dimensional Gaussian with mean $\mu$ and variance $\sigma^2$, $\lambda = 1$ the outage probability when all UAVs are located at $\mu$ is given by
\begin{align}
 P_O([\mu\cdots\mu])  & = \int (1-g(x,\mu))^n \frac{1}{\sqrt{2\pi\sigma^2}}e^{-\frac{(x-\mu)^2}{2\sigma^2}}\mathrm{d}x \\
\label{gaussasymptote} & = \sum_{k=0}^n \binom{n}{k} \frac{(-1)^k e^{-kh^2}}{\sqrt{1+2k\sigma^2}}\mathrm{erf}\left(\sqrt{\frac{1+2k\sigma^2}{2\sigma^2}} \right).
\end{align}
For large $h$, we expect (\ref{gaussasymptote}) to be a good approximation on the optimal outage probability. As a two-dimensional example, for a uniform distribution on $[0,1]^2$, we have
\begin{align}
\label{pqowepoqwepoqwie}
P_O([\begin{smallmatrix} 0.5 & \cdots & 0.5 \\ 0.5 & \cdots &  0.5 \end{smallmatrix} ] )  =  1 + \frac{\pi}{4} \sum_{k=1}^n \binom{n}{k} \frac{(-1)^k}{k e^{kh^2}}\mathrm{erf}^2(\sqrt{k})
\end{align}
to approximate the optimal outage probability at large $h$.


\section{Algorithms for UAV Placement}
\label{secuavplacementalgos}
In the two previous sections, we have presented several theorems that provide the optimal UAV placement under different scenarios and the corresponding outage probabilities. In this section, we present several numerical algorithms that provide the optimal UAV placement in general. We explore centralized solutions that can provide globally-optimal solutions as well as distributed solutions that are amenable to real-time implementation.

\subsection{Centralized Solution: Particle Swarm Optimization}
\label{psoapproach}
First, we seek an algorithm that can provide the optimal UAV deployment without any limitation on system complexity. According to the formulation of the objective function in (\ref{objectivefunc}), the optimal deployment depends on the user density function. Hence, we assume the existence of a decision center that perfectly knows the user density. The decision center will then solve the problem in (\ref{objectivefunc}) via a powerful global optimization algorithm and tell each UAV its optimal location. The UAVs may then move to their respective locations. 

We have used the particle swarm optimization (PSO) method at the decision center. PSO is a population-based stochastic iterative algorithm for solving non-linear optimization problems \cite{psomethod}. In general, population-based optimization algorithms such as PSO are known to outperform the simpler gradient descent like approaches. In fact, the existence of multiple candidate solutions (population members) help to avoid locally optimal solutions. We will later compare the outage probabilities obtained using the PSO method with those provided by gradient-descent based distributed solution. In general, our PSO-based centralized solution will outperform the distributed gradient descent-based solution, albeit at the expense of increased system complexity. 

For completeness, we provide the details of the PSO algorithm in the following: In general, PSO initializes multiple agents/particles at arbitrary positions in the optimization space and searches for the global optima in an iterative manner. Let the subscript $i$ denote the particle index, and the superscript $t$ denote the iteration index. Each particle evaluates the cost function at its own location at each iteration and updates its location ($U_i^{t+1}$) based on its current direction/velocity ($V_i^{t}$), its own best record ($\mathrm{PB}_i^{t}$) and the global best record ($\mathrm{GB}^{t}$) via the following equations: 
\begin{align}
\label{veloupdate}
V_i^{t+1} & = \omega V_i^{t} + C_1 r_1 (\mathrm{PB}_i^{t} - U_i^{t}) + C_2 r_2 (\mathrm{GB}^{t} - U_i^{t}), \\
\label{locupdate}
U_i^{t+1} & = U_i^{t} + V_i^{t+1}.
\end{align}
Here, $V_i^{t}$ is the velocity of Particle $i$ at Iteration $t$ and $U_i^{t}$ is the position of Particle $i$ at Iteration $t$. Also, $\omega$ is the inertia coefficient that determines the speed of exploration. In particular, a higher $\omega$ results in a wider search area and hence more exploration and a lower $\omega$ results in less exploration. In our work, we decrease $\omega$ as the number of iterations increases so as to guarantee the convergence of the algorithm. $\mathrm{PB}_i^{t}$ is the best personal record (the location with lowest cost) for Particle $i$ at Iteration $t$ and $\mathrm{GB}_{t}$ is the global best record at iteration $t$. 

Scalars $r_1$ and $r_2$ are chosen independently and uniformly at random for each particle at each iteration. Deterministic parameters $C_1$ and $C_2$ then control the so-called ``exploration'' and ``exploitation'' behavior of the algorithm. For example, with a larger $C_1$, particles tend to converge to their own personal best solutions, resulting in exploitation of local optima. On the other hand, with a larger $C_2$, particles tend to converge to the global best. For such convergence to occur, the particles tend to travel larger distances, which translates to an increased exploration of search space. In update step, first the velocity is updated based on (\ref{veloupdate}), and then the location of particle is updated based on (\ref{locupdate}). Note that for the application to our UAV placement problem, the location of one PSO particle corresponds to the entire UAV deployment.



The pseudo code of the entire PSO procedure is then summarized in Algorithm \ref{psoalgo}.

\begin{algorithm}
\caption{Particle Swarm Optimization for UAV Placement}
\begin{algorithmic}[1]
\State Parameters: Number of particles $N$, Number of iterations $T$.
\State Initialize particle positions $U_i^1,\,i=1,\ldots,N$.
\State Set $\mathrm{PB}_i^{1} = U_i^1$, and $\mathrm{GB}^1 = \arg\min_{\mathrm{PB}_i^1} P_O(\mathrm{PB}_i^{1})$, where $P_O$ is defined in (\ref{objectivefunc}).
\For{$t = 1$ to $T$}
\For{$i = 1$ to $N$}
\State  Calculate particle velocity $V_i^{t+1}$ according to (\ref{veloupdate}).
\State  Update particle position $U_i^{t+1}$ according to (\ref{locupdate}).
\State {\bf if $P_O(U_i^{t+1}) < P_O(\mathrm{PB}_i^t)$ then $\mathrm{PB}_i^{t+1} = U_i^{t+1}$ else $\mathrm{PB}_i^{t+1} = \mathrm{PB}_i^{t}$}.
\EndFor
\State Set $\mathrm{GB}^{t+1} = \arg\min_{\mathrm{PB}_i^{t+1}} P_O(\mathrm{PB}_i^{t+1})$.
\EndFor
\end{algorithmic}
\label{psoalgo}
\end{algorithm}

\subsection{Distributed Solution: Gradient Descent Optimization} 
Due to its population-based nature, we expect the PSO based approach of Section \ref{psoapproach} to provide the optimal or close-to-optimal performance given sufficiently many particles.
We will also verify this intuition through our numerical simulation results in Section \ref{secnumerical}. On the other hand, as discussed in Section \ref{psoapproach}, PSO requires centralized processing and perfect knowledge of the user density over the entire area of interest, which may not be practical. We thus consider the design of  distributed UAV location optimization algorithms that can work with limited user density information. We will utilize the  performance achieved by the PSO algorithm as a baseline for such practical distributed algorithms.

We follow the standard gradient descent idea \cite{cortes1} to come up with a distributed UAV location optimization scheme. The gradients of the cost function in (\ref{objectivefunc}) can be calculated to be:
\begin{align}
\label{uavgrad}
    G_i \triangleq \frac{\partial P_O(U)}{\partial u_i} = \int  h(x,u_i) \textstyle\prod_{\substack{j=1 \\ j\neq i}}^n (1-g(x,u_j)) f(x)\mathrm{d}x,i=1,\ldots,n,
\end{align}
where
\begin{align}
h(x,u_i) \triangleq -\frac{\partial g(x,u_i) }{\partial u_i} =r \lambda (x-u_i) (\|x-u_i\|^2 + h^2)^{\frac{r}{2}-1} \exp\left( -\lambda (\|x-u_i\|^2 + h^2)^{\frac{r}{2}} \right).
\end{align}
UAV $i$ may then perform the gradient descent update $u_i \leftarrow u_i - \eta \frac{\partial P_O(U)}{\partial u_i}$ for some learning parameter $\eta > 0$. Unfortunately, to calculate the gradient (\ref{uavgrad}), each UAV needs to know the location of all other UAVs, which is not practical. Therefore, we consider a system where each UAV can communicate only with the UAVs within a certain given distance $D_c > 0$. This results in the approximate gradient
\begin{align}
   G_i' = \int h(x,u_i) \textstyle\prod_{\substack{j=1,\, j\neq i \\ \|u_j - u_i\| \leq D_c}}^n (1-g(x,u_j)) f(x)\mathrm{d}x.
\end{align}
In addition, sensing the user density of the whole area of interest may not be feasible in terms of energy or computational complexity. As a result, we assume that each UAV can only know the user density function within a certain distance $D_s > 0$. In other words, we assume that UAV $i$ only knows the user density function within the disk $\mathcal{B}(u_i,D_s)$ of radius $D_s > 0$ centered at $u_i$. The approximate gradient at UAV $i$ can then be expressed as
\begin{align}
\label{finalgradientform}
   G_i'' = \int_{\mathcal{B}(u_i,D_s)} h(x,u_i) \textstyle\prod_{\substack{j=1,\, j\neq i \\ \|u_j - u_i\| \leq D_c}}^n (1-g(x,u_j)) f(x)\mathrm{d}x.
\end{align}

Our entire distributed gradient-descent based UAV location optimization solution is summarized in Algorithm \ref{graddescentalgo}. We note that, in Algorithm \ref{graddescentalgo}, the learning parameter $\eta$ may be modified to increase the speed of convergence or to prevent oscillations/divergence.
In practice, all UAVs may agree on an initial $\eta$ and modify it at each iteration according to a predetermined policy. Overall, the algorithm can be implemented in a distributed manner with each UAV communicating with only the UAVs within distance $D_c$. Moreover, each UAV only needs to know the user density function within the distance $D_s$. We will observe in the next section that even with small values of $D_c$ and $D_s$, the performance with Algorithm \ref{graddescentalgo} is very close to the performance with the PSO-based Algorithm \ref{psoalgo}. 

\begin{algorithm}
\caption{Gradient Descent Optimization for Distributed UAV Placement}
\begin{algorithmic}[l]
\State Parameters: Number of iterations $T$.
\State Initialize UAV locations $u_1,\ldots,u_n$.
\For{$t = 1$ to $T$}
\For{$i = 1$ to $n$}
\State UAV $i$ finds neighbors within distance $D_c$.
\State UAV $i$ calculates its local gradient according to (\ref{finalgradientform}).
\State UAV $i$ updates its location according to $u_i \leftarrow u_i - \eta G_i''$
\EndFor
\EndFor
\end{algorithmic}
\label{graddescentalgo}
\end{algorithm}

\section{Extensions to a Practical Rician Fading Model}
\label{secextrician}

In this section we extend our results to a modified Rician fading model that is specifically tailored for UAV communication systems \cite{Azari2018}. To introduce the model, let $\theta_i = \tan^{-1}(\frac{h}{\|x-u_i\|})$ denote the elevation angle between a UAV at $u_i$ and a user at $x$. The model relies on path loss exponents that depend on the elevation angles through $r(\theta) = a_1 P_{LOS}(\theta)+b_1$, where  $P_{LOS}(\theta) = 1/(1+a_2 e^{-b_2(\theta-a_2)})$ is the line of sight probability, and $a_1,\,b_1,\,a_2$ and $b_2$ are determined by the environment
characteristics and the transmission frequency. The model also relies on an angle-dependent $K$-factor $K(\theta) = a_3 e^{b_3 \theta}$, where $a_3$ and $b_3$ are constants. The channel gain is then expressed via the Rician PDF
\begin{equation}
\label{ricianpdfexpre}
f_{|\beta_i|^2}(\omega) = (K(\theta)+1)\exp{\left(-K(\theta)-(K(\theta)+1)\omega\right)}I_{0}\left(2\sqrt{K(\theta)(K(\theta)+1)\omega}\right).
\end{equation}

Using (\ref{pqwepoqiwepoqiweq}) and (\ref{ricianpdfexpre}), the outage probability is:
\begin{align}
 P_{O_i}(x,u_i) & = \mathrm{P}\left( |\beta_i|^2 \leq \lambda (\|x-u_i\|^2 + h^2)^{r(\theta_i)/2} \right)  \\
\label{qfunctionhere} & =  1 - \mathrm{Q}\left(\sqrt{2K(\theta_i)},\sqrt{2\lambda (K(\theta_i)+1)(\|x-u_i\|^2 + h^2)^{r(\theta_i)/2}}\right).
\end{align}
Here, the first order Marcum $Q$ function is defined as
\begin{align}
\label{marcumqdef333}
Q(a,b) = \int_{b}^{\infty} x \exp \left(-\frac{x^2+a^2}{2}\right) I_0(ax) \mathrm{d}x,
\end{align}
 where $I_\nu(x)$ is the modified Bessel function of order $\nu$.
Now, let $w(x,u_i)$ denote the $Q$-function on the right hand side of (\ref{qfunctionhere}). 
Averaging out the user density function, the outage probability is
\begin{align}
\label{objectivefuncRician}
P_{O}(U)  =  \int \prod_{i=1}^n \left[1 - w(x,u_i) \right] f(x)\mathrm{d}x.
\end{align}

In the following, we provide a closed form solution for the optimal placement of UAVs for one UAV at an arbitrary altitude, and arbitrary number of UAVs at high altitudes. 

\begin{theorem}
\label{oneuavtheoremRician}
Let $n=1$, and suppose that $f$ is unimodal with center $\mu$. Then, for the Rician fading model, for any $h \geq 0$, an optimal deployment of the single UAV is given by $U^{\star} = \mu$.
\end{theorem}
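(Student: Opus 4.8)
The plan is to follow the proof of Theorem~\ref{oneuavtheorem} essentially verbatim, the only genuinely new ingredient being a monotonicity property of the modified-Rician outage expression. Since $n=1$, we have $P_O(U)=1-\int w(x,u)f(x)\mathrm{d}x$, so minimizing the outage probability is equivalent to maximizing $\Phi(u)\triangleq\int w(x,u)f(x)\mathrm{d}x$ over $u$. The first, and trivial, observation is that the elevation angle $\theta=\tan^{-1}(h/\|x-u\|)$ --- and therefore $K(\theta)$, $r(\theta)$, and $w(x,u)$ itself --- depends on $x$ and $u$ only through the horizontal distance $\|x-u\|$. Hence $w(x,u)=W(\|x-u\|)$, where $W(\delta)\triangleq\mathrm{Q}\bigl(\sqrt{2K(\theta)},\sqrt{2\lambda(K(\theta)+1)(\delta^2+h^2)^{r(\theta)/2}}\bigr)$ with $\theta=\tan^{-1}(h/\delta)$. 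This is exactly the structural form of $g(x,u)$ in the Rayleigh case, so once the radial profile $W$ is shown to be non-increasing, the Rayleigh argument applies unchanged.

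Establishing that $W$ is non-increasing on $[0,\infty)$ --- equivalently, that the outage probability $1-w$ grows with the horizontal distance --- is the step I expect to be the main obstacle. It is physically natural but technically delicate: increasing $\delta$ lowers $\theta$, which decreases both the $K$-factor $K(\theta)=a_3e^{b_3\theta}$ and (through $P_{\mathrm{LOS}}$) the path-loss exponent $r(\theta)=a_1P_{\mathrm{LOS}}(\theta)+b_1$, so the two arguments of the Marcum $Q$-function move in ways that must be reconciled. I would isolate this as a lemma and prove it by combining the facts that $\mathrm{Q}(a,b)$ is increasing in $a$ and decreasing in $b$ with the monotonicity of $K(\cdot)$, $r(\cdot)$, $P_{\mathrm{LOS}}(\cdot)$ in $\theta$ under the usual positivity assumptions on the model parameters of \cite{Azari2018}, while controlling the residual $\delta$-dependence inside the second argument (a mild condition, or a direct two-point comparison of $\mathrm{Q}$ at two distances, should suffice).

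Granting that $W$ is non-increasing, the remainder is the symmetrization argument of Theorem~\ref{oneuavtheorem}; assume without loss of generality $\mu=0$. For $d=1$: symmetry of $f$ gives $\Phi(-u)=\Phi(u)$, so fix $u\ge 0$; pairing every point $x$ with its mirror image $u-x$ across $u/2$ and summing the integrands of $\Phi(0)$ and of $\Phi(u)$ over each such pair yields $\Phi(0)-\Phi(u)=\int_{u/2}^{\infty}\bigl[W(|x|)-W(|x-u|)\bigr]\bigl[f(x)-f(u-x)\bigr]\,\mathrm{d}x$, and for $x>u/2$ both bracketed factors are non-positive (since $|x|>|x-u|$, while $W$ is non-increasing and $f$ is symmetric and non-increasing in distance from $0$), so the integral is $\ge 0$. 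For $d=2$: interchange the order of integration and apply the one-dimensional result one coordinate at a time --- for each fixed $x_2$ the map $t\mapsto W(\sqrt{t^2+(x_2-u_2)^2})$ is even and non-increasing in $|t|$ and $f(\cdot,x_2)$ is unimodal about $0$, giving $\Phi(u_1,u_2)\le\Phi(0,u_2)$; repeating in the second coordinate gives $\Phi(0,u_2)\le\Phi(0,0)$. Hence $\Phi(u)\le\Phi(\mu)$ for all $u$, so $U^{\star}=\mu$ is an optimal single-UAV deployment.
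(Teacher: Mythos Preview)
Your approach is essentially the paper's: both reduce to maximizing $\int w(x,u)f(x)\,\mathrm{d}x$ and then run the symmetrization argument from the proof of Theorem~\ref{oneuavtheorem}. The paper's version simply checks that $\zeta(t)\triangleq w(t,-\tfrac{\alpha}{2})-w(t,\tfrac{\alpha}{2})$ is odd and then states that ``the proof can now be completed using the same arguments as in the proof of Theorem~\ref{oneuavtheorem},'' which tacitly presupposes the analogue of inequality~(\ref{qwjepqowjepqowje2}), i.e.\ that $w(x,u)$ is non-increasing in $\|x-u\|$. You have correctly isolated exactly this monotonicity of the radial profile $W$ as the only genuinely new ingredient, and you are more candid than the paper about the competing effects (decreasing $K(\theta)$, changing $r(\theta)$, increasing $\delta^2+h^2$) inside the two arguments of the Marcum $Q$-function. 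Your sketch of the monotonicity lemma is plausible under the intended parameter signs of \cite{Azari2018} but is not yet a proof; the paper does not supply one either, so on this point you are at least as complete as the original.
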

\begin{proof}
See Appendix \ref{appendixRician}.
\end{proof}

 Similarly to Theorem \ref{oneuavtheorem}, if the density is symmetric around $\mu$, then the optimal deployment of the single UAV should be $\mu$ for the Rician fading model. We can also verify that the collapsing phenomenon in Theorem \ref{asymptottheorem} extends to Rician fading as well.

\begin{theorem}
\label{largeHtheoremRician}
Suppose that $f$ is unimodal with center $\mu$. For the Rician fading model, at high altitudes ($h\rightarrow\infty$), all UAVs should be located at $\arg\max_u \int w(x,u)f(x)\mathrm{d}x$.
\end{theorem}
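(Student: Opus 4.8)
Proof sketch. The plan is to follow the route of the proof of Theorem \ref{asymptottheorem}, with the Marcum-$Q$ term $w(x,u)$ taking over the role played there by the exponential $g(x,u)$. The crucial preliminary step is to show that all links degrade \emph{uniformly} as the altitude grows, i.e.\ that $\sup_{x,u} w(x,u)\to 0$ as $h\to\infty$. Writing $w(x,u)=Q\bigl(\sqrt{2K(\theta)},\,b(x,u)\bigr)$ with $b(x,u)=\sqrt{2\lambda\,(K(\theta)+1)\,(\|x-u\|^2+h^2)^{r(\theta)/2}}$ and $\theta=\theta(x,u)\in(0,\pi/2]$, continuity of $P_{LOS}$ on $[0,\pi/2]$ (and the relevant parameter ranges of the model) yield constants $0<r_{\min}\le r(\theta)$ and $K(\theta)\le K_{\max}$ for all $\theta$; hence $b(x,u)\ge\sqrt{2\lambda}\,h^{r_{\min}/2}$ for $h\ge 1$, uniformly in $(x,u)$, while $\sqrt{2K(\theta)}\in[0,\sqrt{2K_{\max}}]$. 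Plugging this into a standard exponential tail estimate for the Marcum-$Q$ function (e.g.\ $Q(a,b)\le e^{-(b-a)^2/2}$ for $b>a$) gives, once $h$ is large enough that $\sqrt{2\lambda}\,h^{r_{\min}/2}\ge\sqrt{2K_{\max}}$, the uniform bound $w(x,u)\le\varepsilon(h)\triangleq\exp\!\bigl(-\tfrac12(\sqrt{2\lambda}\,h^{r_{\min}/2}-\sqrt{2K_{\max}})^2\bigr)$, and $\varepsilon(h)\to 0$.

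Second, I would invoke the elementary inclusion--exclusion (Bonferroni) bounds: for any $a_1,\dots,a_n\in[0,1]$, $\sum_i a_i-\sum_{i<j}a_ia_j\le 1-\prod_i(1-a_i)\le\sum_i a_i$. Using the right inequality with $a_i=w(x,u_i)$, multiplying by $f(x)$ and integrating gives, by the definition (\ref{objectivefuncRician}) of $P_O$, that for \emph{every} deployment $U$ one has $1-P_O(U)\le\sum_i\int w(x,u_i)f\,\mathrm{d}x\le n\,W(h)$, where $W(h)\triangleq\max_u\int w(x,u)f(x)\mathrm{d}x$; this maximum is attained since $u\mapsto\int w(x,u)f\,\mathrm{d}x$ is continuous and tends to $0$ as $\|u\|\to\infty$ (dominated convergence, using $r_{\min}>0$), and $\mu$ is a maximizer by Theorem \ref{oneuavtheoremRician}. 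Using the left inequality for the deployment $U_\mu$ with all UAVs at $\mu$, together with $w^2\le\varepsilon(h)\,w$, gives $1-P_O(U_\mu)\ge n\,W(h)-\binom{n}{2}\int w(x,\mu)^2 f\,\mathrm{d}x\ge n\,W(h)\bigl(1-\tfrac{n-1}{2}\varepsilon(h)\bigr)$.

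Combining the two estimates with the optimality of $U^\star$ (so $1-P_O(U^\star)\ge 1-P_O(U_\mu)$) sandwiches $n\,W(h)(1-o(1))\le 1-P_O(U_\mu)\le 1-P_O(U^\star)\le n\,W(h)$, whence $1-P_O(U^\star)\sim nW(h)\sim 1-P_O(U_\mu)$ as $h\to\infty$. In other words, collapsing all UAVs to $\mu=\arg\max_u\int w(x,u)f\,\mathrm{d}x$ is asymptotically optimal, which is the claim; as a byproduct one also recovers the asymptotic formula $1-P_O(U^\star)\sim n\int w(x,\mu)f(x)\mathrm{d}x$, exactly paralleling (\ref{qiwepoqwueqw}). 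Note that unimodality of $f$ enters only through Theorem \ref{oneuavtheoremRician}, to identify the limiting location as $\mu$.

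The step I expect to be the main obstacle is the uniform decay $\sup_{x,u}w(x,u)\to 0$: since both arguments of the Marcum-$Q$ function depend on $h$, $x$ and $u$ simultaneously through the elevation angle, one must extract a lower bound on the second argument that is genuinely uniform in $(x,u)$ --- which is exactly where positivity of $r_{\min}$, a structural feature of the Azari et al.\ path-loss model, is needed --- and pair it with a tail estimate for $Q$ that is uniform in the (bounded) first argument. Once that uniformity is secured, the remainder is the same inclusion--exclusion bookkeeping as in the Rayleigh case, together with Theorem \ref{oneuavtheoremRician}.
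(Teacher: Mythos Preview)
Your proposal is correct and shares the paper's high-level strategy---establish that $w(x,u)\to 0$ uniformly as $h\to\infty$, then use inclusion--exclusion to sandwich $1-P_O(U^{\star})$ between $nW(h)$ and $nW(h)(1-o(1))$---but the technical execution differs in several respects worth recording. The paper derives its Marcum-$Q$ tail bound $Q_1(a,b)\le e^{-b^2/2+ab}$ from the series representation together with a Bessel-function inequality; you instead quote the (slightly sharper) standard estimate $Q_1(a,b)\le e^{-(b-a)^2/2}$ directly. The paper then passes to the limiting parameter values $K'=K(\pi/2)$ and $r'=r(\pi/2)$ and introduces the diameter $\tau$ of the support of $f$ to control $w$, so its argument implicitly requires bounded support even though the theorem statement does not; your use of \emph{uniform} bounds $r_{\min}\le r(\theta)$ and $K(\theta)\le K_{\max}$ over all $\theta\in[0,\pi/2]$ sidesteps this restriction and is in that sense cleaner and more general. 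Finally, the paper carries the full inclusion--exclusion expansion and bounds all higher-order terms by $2^n e^{-2K_1 h^{r'}+2K_2(\tau^2+h^2)^{r'/4}}$, whereas you truncate at the second Bonferroni inequality and use $w^2\le\varepsilon(h)\,w$, which is more elementary and entirely sufficient for the asymptotic conclusion. Both routes invoke Theorem~\ref{oneuavtheoremRician} at the end to identify the collapsing point as $\mu$.
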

\begin{proof}
See Appendix \ref{appendixRician2}.
\end{proof}

Collapsing appears to be a general phenomenon that may apply to a yet wider class of fading distributions. We leave a detailed further study as future work.


\section{Numerical Results}
\label{secnumerical}
In this section, we provide numerical simulation results that confirm our analytical findings.

\subsection{Verification of Analytical Results on Optimal Deployments}

In Fig. \ref{figuniform}, we compare the optimal outage probability provided by the PSO algorithm with the analytical formula (\ref{pqowepoqwepoqwie}) for different values of the UAV altitude $h$, and path loss exponent $r = 2$. The horizontal axis represents the number of UAVs, and the vertical axis represents the logarithm of the optimal outage probability. We can observe that the logarithm of the outage probability decays linearly with the number of UAVs. Hence, the outage probability decays exponentially with $n$, verifying Theorems \ref{randdeploy}, \ref{lowerboundone}, and \ref{triviallbh0}.

\begin{figure}
  \begin{center}
    \includegraphics[width=0.7\textwidth]{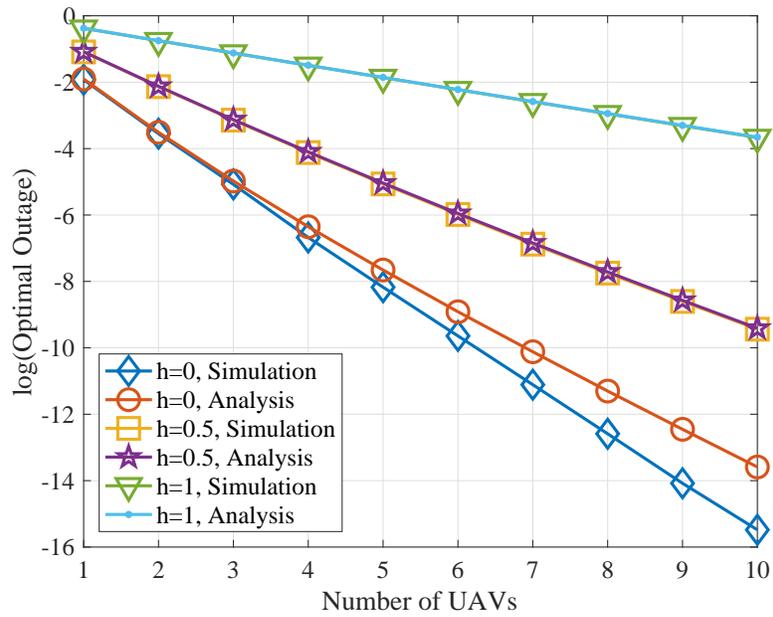}
  \end{center}
  \vspace{-5pt}
  \caption{ Outage probabilities for $f(x) = 1(x\in[0,1]^2)$.}
  \label{figuniform}
\end{figure} 

As $h$ increases, the analytical formula in (\ref{pqowepoqwepoqwie}) provides a very good approximation of the optimal outage probability derived with PSO algorithm (Indicated by 'Simulation' in the figure). In fact, for $h\in\{0.5,1\}$, the analysis is almost indistinguishable from the simulations.

\begin{figure}
  \begin{center}
  \vspace{-20pt}
    \includegraphics[width=0.7\textwidth]{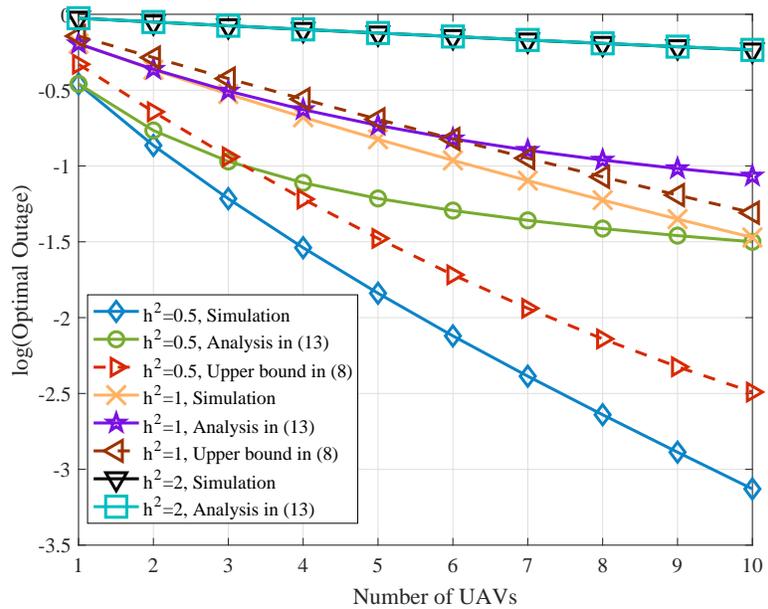}
  \end{center}
  \vspace{-5pt}
  \caption{ Outage probabilities for a standard normal GT density.}
  \label{fignormal}
\end{figure}

In Fig. \ref{fignormal}, we consider a standard normal distribution for the GT density, and compare the outage probabilities provided by the PSO algorithm with the analytical formula in (\ref{gaussasymptote}) for $r=3$. We can similarly observe the exponential decay of the outage probability with respect to the number  of UAVs. Also, for any number of UAVs, the analysis matches the simulations perfectly when $h^2 = 2$. On the other hand, the mismatch for lower altitudes is more pronounced compared to the case of the uniform distribution in Fig. \ref{figuniform}. As a result, we have also included the upper bounds derived using (\ref{qoiweuoqiweuqw2}).

We chose the random variable $U$ to be a standard normal distribution, which provided the best upper bound among all other choices for $U$ that we have considered for large $n$. We can observe that the upper bound in (\ref{qoiweuoqiweuqw2}) provides a better prediction of the optimal performance when $n$ is large. Yet, there is still a gap between the estimated and the actual performance. This shows the necessity of better bounds on the outage probability, a topic that will be studied as future work.

   \begin{figure}[!ht]
     \subfloat[Rayleigh fading.
     \label{figcollapse}]{%
       \hspace{-10pt}\includegraphics[width=0.55\textwidth]{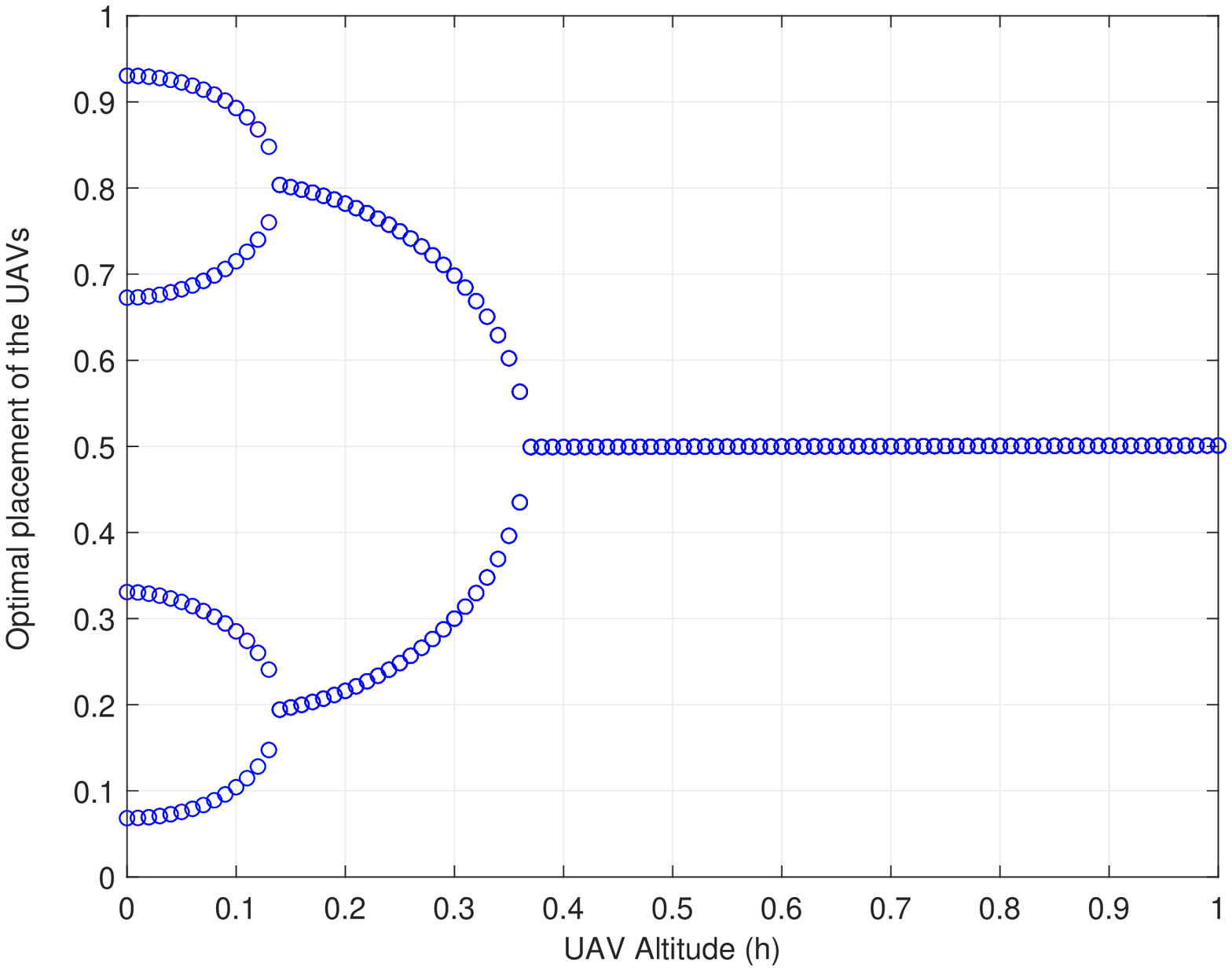}
       \hspace{-10pt}
     }
     \hfill
     \subfloat[Rician fading.
     \label{figcollapseRice}]{
       \hspace{-20pt}\includegraphics[width=0.55\textwidth]{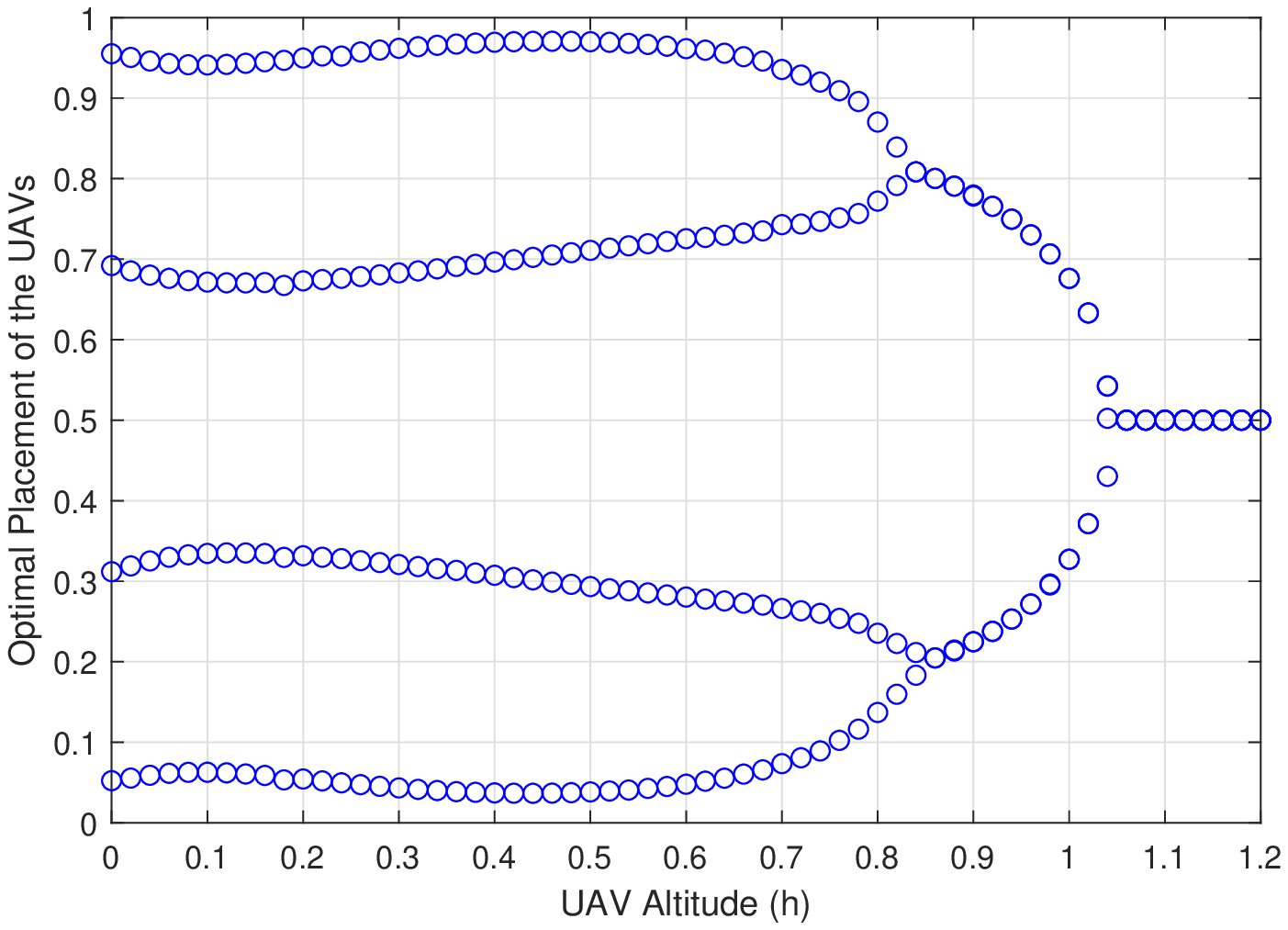}
       \vspace{-20pt}
     }
     \caption{ Optimal locations of $4$ UAVs for a uniform density on $[0,1]$ }
     \label{fig4sub}
   \end{figure}

In Fig. \ref{figcollapse} and \ref{figcollapseRice}, we illustrate the collapse of the optimal UAV locations to a single location as the UAV altitude constraint $h$ increases for both Rayleigh and Rician model. We have considered the choice $r=2$. The horizontal axis represents $h$, and the vertical axis represents the optimal UAV locations. For the Rayleigh model, we can observe that at $h=0$, the optimal deployment is roughly given by $U^{\star} = [0.08 \,\, 0.33 \,\,0.66 \,\,     0.92]$. As $h$ is increased to around $0.15$, the four UAVs first collapse to two distinct locations. After $h \geq 0.4$, the optimal locations for all four UAVs is equal to $0.5$, as Theorem \ref{asymptottheorem} suggests. Note that Theorem \ref{asymptottheorem} shows that the UAV locations will converge to $0.5$ as $h\rightarrow\infty$. Here, we observe the stronger phenomenon that the UAVs should be located at $0.5$ for all sufficiently large altitudes. A similar pattern can be observed for the UAV trajectories derived based on Rician fading model. 


\vspace{20pt}
\subsection{Performance of Distributed UAV Placement Algorithms}
\label{secdistuavplacementalgoperf}


We now compare the practical distributed UAV placement method in Algorithm \ref{graddescentalgo} with the centralized PSO-based method in Algorithm \ref{psoalgo}. The simulation parameters are as follows: A suburban area is considered with carrier frequency $f_c = 2.4$ GHz. For the $K$-factor and the path loss exponent, the parameters are set as  $a_1 = -1.5$, $b_1 = 3.5$, $a_2 = 4.88$, $b_2 = 0.43$,  $a_3=5$, $b_3 = \frac{2}{\pi}\log 3$, $\frac{A P}{N_0} = 75dB$. These are the same choice of parameters as in \cite{Azari2018}.

In Fig. \ref{distalgofig1}, we show the numerical simulation results for the one-dimensional uniform density function $f(x) = 10^{-3}(x\in[0,1000])$ at altitude $h = 500$. All distances are in meters. The UAV locations are initialized independently and uniformly at random on $[0,1000]$ in Algorithm \ref{graddescentalgo}. This is a valid assumption provided that we have prior information about the boundaries of the area of interest. The shown results are averages over many initialization of UAV locations.

\begin{figure}
  \begin{center}
  \vspace{-20pt}
    \includegraphics[width=0.7\textwidth]{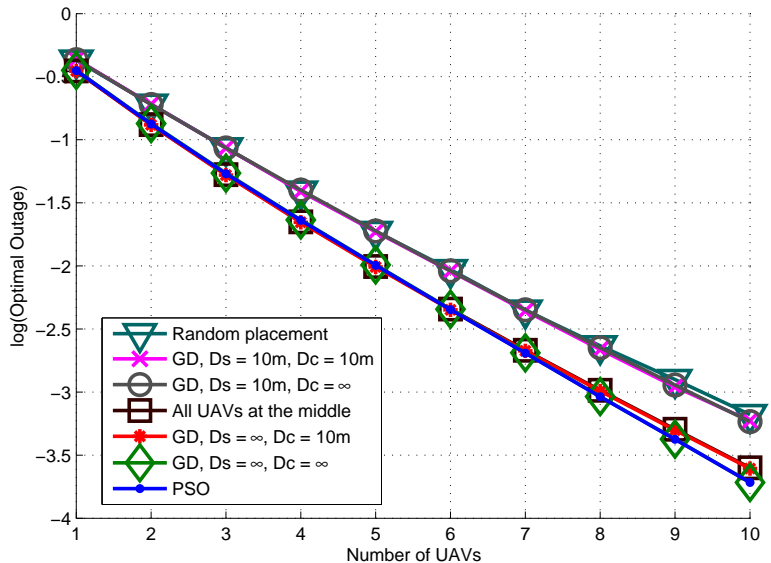}
  \end{center}\vspace{-10pt}
  \caption{ Comparison of UAV deployment algorithms for a one-dimensional uniform density at $h = 500m$.}\vspace{-10pt}
  \label{distalgofig1}
\end{figure}


We can observe that PSO provides the best-possible performance. Moreover, ``ordinary'' gradient descent ($D_s=D_c=\infty$) provides almost the same performance as PSO. As the sensing and/or the communication range is decreased, the performance typically deteriorates. Interestingly, we can also observe that decreased communication and sensing range becomes particularly harmful as one employs more UAVs in the system. This is expected, since with more UAVs, a larger number of factors are ignored while transitioning from the ``true'' gradient equation (\ref{uavgrad}) to its approximate form (\ref{finalgradientform}).  

Useful insight on the effect of communication and sensing ranges can be obtained by investigating the asymptotes $D_s \rightarrow \{0,\infty\}$ and/or $D_c \rightarrow \{0,\infty\}$. In particular, for $D_s = \infty$ and $D_c \rightarrow 0$, any given UAV will be unable to ``see'' the other UAVs, and thus collapse to the optimal location of a one-UAV system (which for this case is the mean of distribution, because the uniform density is a unimodal function based on Definition 1) after sufficiently many iterations. In fact, we can observe from Fig. \ref{distalgofig1} that the outage probability with gradient descent for $D_s = \infty$ and $D_c=10m$ is almost the same as the outage probability for the case where all UAVs are located at the middle.

Another extreme case is $D_s,  D_c \rightarrow 0$. In this case, as $D_c \rightarrow 0$, a given UAV will ``think'' it is the only UAV in the system due to lack of communication with the other UAVs. Moreover, as $D_s \rightarrow 0$, each UAV will estimate the user density to be uniform over a set of radius $D_s$ centered around itself. Since the optimal deployment of a single UAV for a uniform distribution is at the center of the area of interest, the modified gradient descent update equations for $D_s,  D_c \rightarrow 0$ imply that the UAVs will not move from their initial positions. A similar phenomenon occurs even if $D_c$ is non-zero provided that $D_s$ stays close to zero. In this case, although all UAVs can communicate to each other, each UAV thinks that it is at the center of a ``very narrow'' area of interest, and all the other UAVs are outside the area. The gradient descent updates imply little to no movement for each UAV. As a result, at convergence, each UAV ends up roughly in the same place that it has been initialized. In such a scenario, the expected performance should be equal to that of random UAV placement. Indeed, in Fig. \ref{distalgofig1}, we can observe that the performance for the cases $D_s = D_c = 10m$ and $D_s = 10m,\,D_c = \infty$ is almost identical to that of random UAV placement, verifying our analytical intuition.

\begin{figure}
  \begin{center}
    \includegraphics[width=0.7\textwidth]{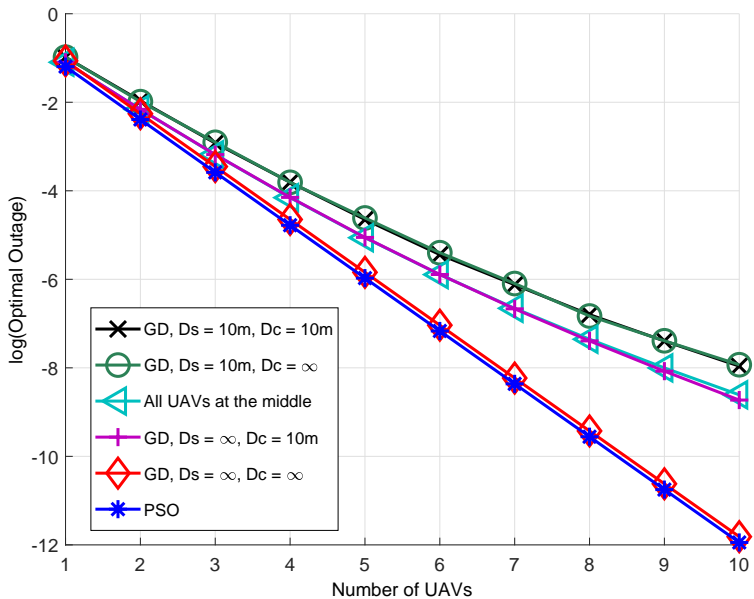}
  \end{center}
  \vspace{-10pt}
  \caption{ Comparison of UAV deployment algorithms for two-dimensional Gaussian density at $h = 300m$.}
  \label{fig7}
\end{figure}

Fig. \ref{fig7} compares the results of PSO algorithms and gradient descent with different setups for a two-dimensional Gaussian user density with zero mean and covariance matrix $100 \cdot \mathbf{I}_{2\times2}$.  We can similarly observe that as the communication and sensing distance decreases, performance becomes worse. Still, the results of Fig. \ref{fig7} show that our proposed distributed algorithms provide high performance when users are located in two dimensional space according to a unimodal density which is not necessarily uniform.  

Finally, in Fig. \ref{distalgofig5}, we compare the PSO algorithm with the modified gradient descent algorithm for the case $(D_s,D_c)=(500, 500)$ and Rayleigh fading model. We have considered a one-dimensional uniform density function $f(x) = 10^{-3}(x\in[0,1000])$, and UAVs located in different altitudes in $h\in\{100, 300, 500, 1000\}$. We can observe that, as the altitude increases, the performance with gradient descent becomes closer to the performance with PSO which provides the optimal result. Although we reduced the communication and sensing ranges considerably, which reduces the computational and energy cost, the modified gradient descent method still provides the almost optimal result for altitudes greater than some threshold value (e.g. $h=300$m in this example). This indicates that we can use our cost-efficient modified gradient descent algorithm to obtain the optimal results when the UAV altitudes are sufficiently high.

\begin{figure}
  \begin{center}
  \vspace{-20pt}
    \includegraphics[width=0.7\textwidth]{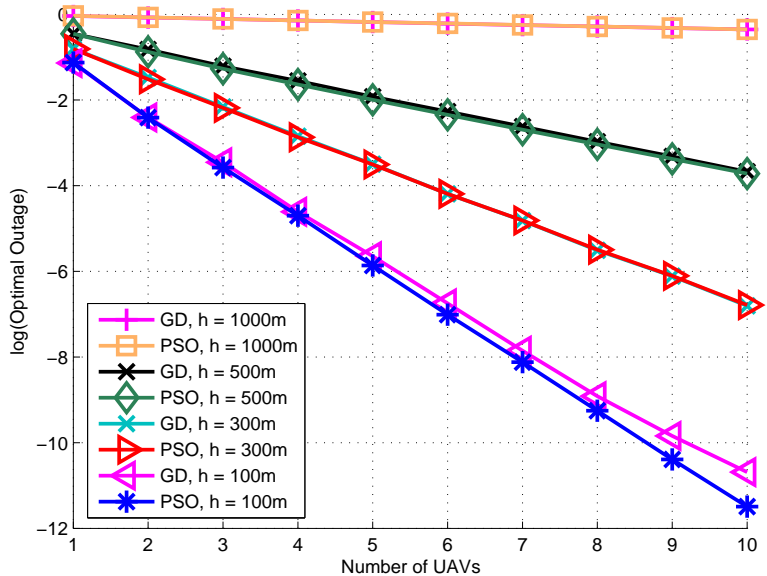}
  \end{center}
  \vspace{-10pt}
  \caption{ Algorithm comparisons for a one-dimensional uniform density and $(D_s,D_c)=(500m, 500m)$ at different altitudes.}
  \label{distalgofig5}
\end{figure}

Modifying $D_s$ and/or $D_c$ depending on the UAV altitude constraint can be considered as another dimension of adaptation in UAV-based communication systems. For example, one can determine the smallest values for $D_s$ and $D_c$ that provides an acceptable performance at each altitude. One can then adaptively change these parameters as the altitude constraints change to conserve sensing or communication power.

\hspace{20pt}
\section{Conclusions}
\label{secconclusions}
We have studied optimal placement of UAVs serving as mobile base stations to several GTs. Our objective has been to minimize the outage probability of the system. We have shown that the optimal outage probability decays exponentially with the number of UAVs. We have also proved that in an optimal deployment, all UAVs collapse to a unique location at large altitudes. We have also designed centralized and distributed algorithms to numerically optimize the UAV locations. We have also extended our results to a practical variant of  Rician fading that is specifically tailored for UAV communications. As future work, we aim to provide better bounds for general GT densities and UAV altitudes. We also plan to study the movement of UAVs according to a time-varying GT density. Moreover, our model has only utilized selection diversity for reception of GT data. It is possible to apply maximum ratio combining for a better performance. Optimal placement of UAVs for this scenario is an interesting direction for future work.

\appendices
\section{Proof of Theorem \ref{oneuavtheorem}}
\label{oneuavtheoremproof}
We need to show that a global maximum of 
\begin{align}
h(u) \triangleq \int g(x,u) f(x) \mathrm{d}x
\end{align}
is located at $u = \mu$. We first provide a proof for the case $d=1$. We need the following properties of a unimodal function. If $f$ is a univariate unimodal density with center $\mu$, we have
\begin{align}
\label{unimod1} f(\mu+c) & = f(\mu-c),\,\forall c\in\mathbb{R}, \\
\label{unimod2}  f(\mu+c) & \geq f(\mu+d) ,\, d \geq c \geq 0, \\
\label{unimod3}  f(\mu-c) & \geq f(\mu-d) ,\, d \geq c \geq 0.
\end{align} 
These properties follow immediately from Definition \ref{unimoddef}.

We are now ready to prove the theorem for $d=1$. We will show that for any $\alpha \in \mathbb{R}$, we have $h(\mu) \geq h(\mu + \alpha)$. Equivalently, defining $H \triangleq h(\mu) - h(\mu + \alpha)$, we will show that $H \geq 0$. First, we consider $\alpha \geq 0$. We have
\begin{align}
H = \int (g(x,\mu) - g(x,\mu+ \alpha))f(x)\mathrm{d}x
\end{align}
Applying a change of variables $t = x - \mu - \frac{\alpha}{2}$, we obtain
\begin{align}
H = \int \zeta(t) f(t + \mu + \tfrac{\alpha}{2})\mathrm{d}t,
\end{align}
where $\zeta(t) \triangleq g(t,-\tfrac{\alpha}{2}) - g(t,\tfrac{\alpha}{2})$. It can easily be verified that $\zeta(t) = -\zeta(-t)$ so that
\begin{align}
H &  = \int_{-\infty}^0  \zeta(t)f(t+\mu+\tfrac{\alpha}{2})\mathrm{d}t +   \int_{0}^{\infty} \zeta(t) f(t+\mu+\tfrac{\alpha}{2})\mathrm{d}t  \\
\label{laksjdlaksjdas} & =  \int_{-\infty}^0 \zeta(t) (f(t+\mu+\tfrac{\alpha}{2}) -f(-t+\mu+\tfrac{\alpha}{2})) \mathrm{d}t.
\end{align}
The equality follows from a change of variables $t \leftarrow -t$, and the fact that $\zeta(\cdot)$ is an odd function. 

Let us now partition the integration domain in (\ref{laksjdlaksjdas}) to two subsets, namely, $(-\infty,-\tfrac{\alpha}{2})$ and $[-\tfrac{\alpha}{2},0]$, and call the resulting integrals $H_1$ and $H_2$, respectively. We have
\begin{align}
H_1 & =  \int_{-\infty}^{-\tfrac{\alpha}{2}} \zeta(t) (f(t+\mu+\tfrac{\alpha}{2}) -f(-t+\mu+\tfrac{\alpha}{2})) \mathrm{d}t \\
\label{honebounddd} & =  \int_{-\infty}^{-\tfrac{\alpha}{2}} \zeta(t) (f(-t+\mu-\tfrac{\alpha}{2}) -f(-t+\mu+\tfrac{\alpha}{2})) \mathrm{d}t.
\end{align}
The equality follows from the symmetry  (\ref{unimod1}) of $f$ around $\mu$. Now, given $t \leq -\frac{\alpha}{2}$, we have $0 \leq -\frac{\alpha}{2}-t \leq \frac{\alpha}{2} - t$. According to (\ref{unimod2}), we obtain 
\begin{align}
\label{qwjepqowjepqowje1}
f(-t+\mu-\tfrac{\alpha}{2}) \geq f(-t+\mu+\tfrac{\alpha}{2}),\,t \leq -\frac{\alpha}{2} 
\end{align}
It can also be easily verified that
\begin{align}
\label{qwjepqowjepqowje2}
\zeta(t) \geq 0,\,\forall t \leq 0.
\end{align}
Applying the bounds (\ref{qwjepqowjepqowje1}) and (\ref{qwjepqowjepqowje2}) to (\ref{honebounddd}), we obtain $H_1 \geq 0$. Similarly, we can show the non-negativity of
\begin{align}
\label{dpoqwpeoiqweq}
H_2 & =  \int_{-\tfrac{\alpha}{2}}^0 \zeta(t) (f(t+\mu+\tfrac{\alpha}{2}) -f(-t+\mu+\tfrac{\alpha}{2})) \mathrm{d}t.
\end{align}
In fact, given $t\in[-\tfrac{\alpha}{2},0]$, we have the chain of inequalities $0 \leq t+\frac{\alpha}{2} \leq -t + \frac{\alpha}{2}$. By (\ref{unimod2}), we obtain $f(\mu + t+\frac{\alpha}{2}) \geq f(\mu -t + \frac{\alpha}{2})$. Applying this bound to (\ref{dpoqwpeoiqweq}) together with (\ref{qwjepqowjepqowje2}), we obtain $H_2 \geq 0$. Since $H_1 \geq 0$ as already shown, we have $H  = H_1 + H_2 \geq 0$. Using the same arguments, one can also show $H \geq 0$ for $\alpha \leq 0$. This concludes the proof for $d=1$. 

Now, suppose $d=2$, $\mu = [\begin{smallmatrix} \mu_1 \\ \mu_2 \end{smallmatrix}]$. To prove the theorem, it is sufficient to show that (i) for every $x$ and $\alpha \geq 0$, we have $h([\begin{smallmatrix} x \\ \mu_2 \end{smallmatrix}]) \geq h([\begin{smallmatrix} x \\ \mu_2+\alpha \end{smallmatrix}])$, and (ii) for every $y$ and $\alpha \geq 0$, we have $h([\begin{smallmatrix} \mu_1 \\ y \end{smallmatrix}]) \geq h([\begin{smallmatrix} \mu_1 + \alpha \\ y \end{smallmatrix}])$. Both claims can be verified using the same arguments that we have used for $d=1$. This concludes the proof of the theorem.

\section{Proof of Theorem \ref{triviallbh0}}
\label{proofoftriviallbh0}
We first consider the case $d=1$ and a uniform distribution on $[0,M]$. Without loss of generality, assume $0 \leq u_1 \leq u_2 \leq \cdots \leq u_n \leq M$. Given $j\in\{1,\ldots,n+1\}$, let $I_j \triangleq [u_{j-1},u_j]$, with the convention that $u_0 = 0$ and $u_{n+1} = 1$. We have
\begin{align}
P_{O}(U) = \sum_{j=1}^n \int_{I_j} \prod_{i=1}^n (1 - \exp(-\lambda (x-u_i)^r))\frac{\mathrm{d}x}{M}
\end{align}
For the $j$th term of the sum, instead of integrating over all $I_j = [u_{j-1},u_j]$, we integrate over 
\begin{align}
I_j' \triangleq \left[u_{j-1} + \frac{|I_j|}{3}, u_{j} - \frac{|I_j|}{3}\right],\,j\in\{1,\ldots,n\}.
\end{align}
By construction, for all $j\in\{1,\ldots,n\}$ and every $x\in I_j'$, we have $|x-u_i| \leq \frac{1}{3}|I_j|,\,\forall i\in\{1,\ldots,n\}$. It follows that
\begin{align}
\label{qojwepqowjepqoeq}
P_{O}(U) & \geq \sum_{j=1}^n \int_{I_j'} \prod_{i=1}^n (1 - \exp(-\lambda(x-u_i)^r))\frac{\mathrm{d}x}{M} \\
& \geq \sum_{j=1}^n \int_{I_j'} \prod_{i=1}^n \bigl(1 - \exp\left(-\lambda |I_j|^{r}/3^r \,\right)\bigr) \frac{\mathrm{d}x}{M} \\
&  = \sum_{j=1}^n \frac{|I_j'|}{M} \bigl(1 - \exp\left(-\lambda |I_j|^{r}/3^r \,\right)\bigr)^n\\
\label{twosumzzzz} &  = \sum_{j=1}^n \frac{|I_j|}{3M} \bigl(1 - \exp\left(-\lambda |I_j|^{r}/3^r \,\right)\bigr)^n
\end{align}
The second derivative of the function $g(x) \triangleq 1-\exp(- \lambda x^r/3^r)$ can be calculated to be  
\begin{align}
g''(x) =  \lambda r9^{-r} \exp(-\lambda x^r/3^r)x^{r-2} (3^r(r-1)-\lambda rx^r), 
\end{align}
which means that $g$ is convex whenever $0 \leq x \leq 3(\lambda)^{-1/r}(1-\frac{1}{r})^{1/r}$. By the second derivative test, it is straightforward to show that the product of two non-decreasing convex functions is convex. It follows that, with regards to the second sum in (\ref{twosumzzzz}), the function $x \mapsto x (1 - \exp(- \lambda x^r/3^r))^n $ is convex whenever $x \leq 3(\lambda)^{-1/r}(1-\frac{1}{r})^{1/r}$. 

Now, suppose there exists $ \ell \in\{1,\ldots,n\}$ such that $|I_{\ell}| \geq 3(\lambda)^{-1/r}(1-\frac{1}{r})^{1/r}$. We then have
\begin{align}
 P_{O}(U) & \geq \frac{|I_{\ell} |}{3M} \bigl(1 - \exp\left(- |I_{\ell}|^{r}/3^r \,\right)\bigr)^n \\
  & \geq \frac{(1-\frac{1}{r})^{1/r}}{M} \bigl(1 - \exp\left(-1 + \tfrac{1}{r} \,\right)\bigr)^{n},
\end{align}
and Theorem \ref{triviallbh0} follows. We can therefore assume  $|I_j| \leq 3(1-\frac{1}{r})^{1/r},\,\forall j\in\{1,\ldots,n\}$.  We now recall Jensen's inequality: For an arbitrary convex function $h$, and arbitrary real numbers $a_1,\ldots,a_n$, we have
$\frac{1}{n} \sum_{i=1}^n h(a_i) \geq h( \frac{1}{n}\sum_{i=1}^n a_i)$. Applying to (\ref{twosumzzzz}), we obtain
\begin{align}
 P_{O}(U) \geq  
 \frac{1}{3} \bigl(1 - \exp\left(- 1/3^r \,\right)\bigr)^n.
\end{align} 
This concludes the proof for $d=1$ and a uniform distribution. The case of a uniform distribution on $[0,M]^2$ in $d=2$ dimensions can be handled similarly by choosing $I_j$s to be a sequence of disks that cover $[0,M]^2$. Correspondingly, for each $j$, we choose $I_j'$ to be the disk that is concentric to $I_j$, but with one-third the radius. The case of a non-uniform density can be handled by considering a box where the density is bounded from below by a positive constant and applying the results already found for a uniform density. This concludes the proof of the theorem in the general case.

\section{Proof of Theorem \ref{asymptottheorem}}
\label{proofofasymptottheorem}
We have the expansion
\begin{align}
P_O(U)  & =  \int \prod_{i=1}^n (1- g(x, u_i)) f(x)\mathrm{d}x
 \\ \label{dlaksjdlaksda} & =  \int \biggl[1 - \sum_{i=1}^n g(x,u_i)  + \sum_{k=2}^n (-1)^k  \sum_{J \in C_{n,k}} \prod_{j \in J} g(x,u_j) \biggr] f(x)\mathrm{d}x,
\end{align}
where $C_{n,k}$ is the collection of all $k$-combinations of the set $\{1,\ldots,n\}$ (For example, $C_{3,2} = \{\{1,2\},\{1,3\},\{2,3\}\}$). Now, note that 
$g(x,u) = \exp\left( -(\|x-u\|^2 + h^2)^{\frac{r}{2}} \right) \leq e^{-h^r}$. Therefore, for any $J\in C_{n.k}$, we have 
\begin{align}
\label{jlkdasjdlkasjd}
\int  \prod_{j \in J} g(x,u_j) f(x)\mathrm{d}x \leq e^{-|J| h^r}.
\end{align}
Using (\ref{jlkdasjdlkasjd}) and the bound $(-1)^k \leq 1$ in (\ref{dlaksjdlaksda}), we obtain
\begin{align}
P_O(U)  & \leq  1 - \sum_{i=1}^n \int g(x,u_i) f(x)\mathrm{d}x + \sum_{k=2}^n \binom{n}{k} e^{-kh^r} \\
 & \leq  1 - \sum_{i=1}^n \int g(x,u_i) f(x)\mathrm{d}x + 2^n e^{-2h^r}
\end{align}
Taking the minimum over all deployments, we have
\begin{align}
\label{alskdjalksdjazz}
P_O(U^{\star})  \leq 1 - n \int g(x,u^{\star}(h)) f(x)\mathrm{d}x +2^n e^{-2h^r}.
\end{align}
A similar argument results in the converse estimate
\begin{align}
\label{alskdjalksdjazzconv}
P_O(U^{\star})  \geq  1 - n \int g(x,u^{\star}(h)) f(x)\mathrm{d}x -2^n e^{-2h^r}.
\end{align}
The statement of the theorem follows from (\ref{alskdjalksdjazz}), (\ref{alskdjalksdjazzconv}), and the dominated convergence theorem. 

\section{Proof of Theorem \ref{oneuavtheoremRician}}
\label{appendixRician}
In the following, we show that a global maximum of 
\begin{align}
h(u) \triangleq \int w(x,u) f(x) \mathrm{d}x
\end{align}
is located at $u = \mu$. We will show that for any $\alpha \in \mathbb{R}$, we have $h(\mu) \geq h(\mu + \alpha)$. Equivalently, defining $H \triangleq h(\mu) - h(\mu + \alpha)$, we will show that $H \geq 0$. First, we consider $\alpha \geq 0$. We have
\begin{align}
H = \int (w(x,\mu) - w(x,\mu+ \alpha))f(x)\mathrm{d}x
\end{align}
Applying a change of variables $t = x - \mu - \frac{\alpha}{2}$, we obtain 
\begin{align}
H = \int \zeta(t) f(t + \mu + \tfrac{\alpha}{2})\mathrm{d}t,
\end{align}
where $\zeta(t) \triangleq w(t,-\tfrac{\alpha}{2}) - w(t,\tfrac{\alpha}{2})$. We now have
\begin{align}
\zeta(t) & \nonumber = \mathrm{Q}\left(\sqrt{2K \tan^{-1}\tfrac{h}{\|t+\frac{\alpha}{2}\|}},\sqrt{2\lambda \bigl(K\tan^{-1}\tfrac{h}{\|t-\tfrac{\alpha}{2}\|}+1\bigr)(\|t+\tfrac{\alpha}{2}\|^2 + h^2)^{r\tan^{-1}\frac{h}{\|t+\frac{\alpha}{2}\|}}} \right) \\
& - \mathrm{Q}\left(\sqrt{2K\tan^{-1}\tfrac{h}{\|t-\frac{\alpha}{2}\|}}, \sqrt{2\lambda \bigl(K\tan^{-1}\tfrac{h}{\|t-\tfrac{\alpha}{2}\|}+1\bigr)(\|t-\tfrac{\alpha}{2}\|^2 + h^2)^{r\tan^{-1}\frac{h}{\|t-\frac{\alpha}{2}\|}}}\right) \\
 & = - [ w(-t,-\tfrac{\alpha}{2}) + w(-t,\tfrac{\alpha}{2})] = -\zeta(-t),
\end{align}
so that $\zeta(\cdot)$ is an odd function. The proof can now be completed using the same arguments as in the proof of Theorem \ref{oneuavtheorem}.

\section{Proof of Theorem \ref{largeHtheoremRician}}
\label{appendixRician2}
We utilize the alternative form of the Marcum Q-function
\begin{equation}
\label{QdefSeries}
Q_1(a,b) = \exp{\left(-\frac{a^2+b^2}{2}\right)} \sum_{k=0}^{\infty} \left(\frac{a}{b}\right)^k I_k(ab).
\end{equation}
Also, from \cite{415987} we have
\begin{equation}
\label{1178}
I_{k}(t) \leq \frac{t^k e^t}{2^k k!}   
\end{equation}
To prove (\ref{1178}), first note that \cite{walsh1972}
\begin{equation}
\label{1177}
\Gamma (k + 1) \left( \frac{2}{t} \right)^{k} I_{k}(t) < \cosh (t)
\end{equation}
for $t>0$ and $k > -1/2$. In addition, for $t>0$ we have $\cosh (t) - e^{t} = -\sinh(t) <0$, and thus $\cosh (t) < e^{t}$. Substituting to (\ref{1177}), we obtain
\begin{equation}
\Gamma (k + 1) \left( \frac{2}{t} \right)^{k} I_{k}(t) < e^{t},
\end{equation}
which yields (\ref{1178}). Hence, we have
\begin{align}
Q_1(a,b) \leq \exp \left(-\frac{a^2+b^2}{2}\right) \sum_{k=0}^{\infty} \left(\frac{a}{b}\right)^k \frac{(ab)^k e^{ab}}{2^k k!}  \\
\label{1122}
= \exp{\left(-\frac{a^2+b^2}{2}+ab\right)} \sum_{k=0}^{\infty} \left(\frac{a^2}{2}\right)^k \frac{1}{k!} 
\end{align}
Also, using $\sum_{k=0}^{\infty} \frac{t^k}{k!} = e^{t}$ and (\ref{1122}), we obtain
\begin{align}
\label{Qbound}
Q_1(a,b) \leq  \exp{\left(-\frac{a^2+b^2}{2}+ab + \frac{a^2}{2}\right)} =  \exp{\left(-\frac{b^2}{2}+ab \right)}.
\end{align}

We can now expand the outage probability as
\begin{align}
P_O(U)  & =  \int \prod_{i=1}^n (1- w(x, u_i)) f(x)\mathrm{d}x
 \\ \label{7373} & =  \int \biggl[1 - \sum_{i=1}^n w(x,u_i)  + \sum_{k=2}^n (-1)^k  \sum_{J \in C_{n,k}} \prod_{j \in J} w(x,u_j) \biggr] f(x)\mathrm{d}x.
\end{align}


In particular, for the terms $w(x,u_i)$, we apply (\ref{Qbound}) to the definition of $w(x,u_i)$ to obtain
\begin{multline}
\label{1144}
w(x,u_i) \leq  \exp ( -\lambda (K(\theta_i)+1)(\|x-u_i\|^2 + h^2)^{r(\theta_i)/2} +\\
\sqrt{4\lambda K(\theta_i) (K(\theta_i)+1)}(\|x-u_i\|^2 + h^2)^{r(\theta_i)/4}). 
\end{multline}
Furthermore, as $h\rightarrow\infty$, we have $\theta \rightarrow \frac{\pi}{2}$. Correspondingly, let $K' \triangleq K(\frac{\pi}{2}) = a_{3}.e^{b_3 \frac{\pi}{2}}$, $P' \triangleq  P_{LOS}(\frac{\pi}{2}) = \frac{1}{1+a_2 e^{-b_2(\frac{\pi}{2}-a_2)}}$, and $r' \triangleq r(\frac{\pi}{2}) = a_1P_{{\infty}}+b_1$ denote the limiting value for various parameters for large elevations. Also, let $K_1= \lambda(K_{\infty}+1)$, $K_2 = \sqrt{4 \lambda \mathrm{K}_{\infty} (\mathrm{K}_{\infty}+1)}$, and $\tau$ be the maximum distance between any two points in the support of the user density. For large elevations, (\ref{1144}) can be bounded as
\begin{equation}
\label{1167}
w(x, u_i) \leq  e^{ -K_1 h^{r'} + K_2(\tau^2 + h^2)^{r'/4} }.
\end{equation}
The upper bound goes to $0$ as $K_1>0$ and  $(\tau^2 + h^2)^{r'/4} = o(h^{r'})$ as $h\to \infty$. Therefore, for any $J\in C_{n,k}$, we have 
\begin{align}
\label{jjdsjd}
\int  \prod_{j \in J} w(x,u_j) f(x)\mathrm{d}x \leq
 e^{ -K_1 |J| h^{r'} + K_2|J|(\tau^2 + h^2)^{r'/4} } 
\end{align}

Using (\ref{jjdsjd}) and the bound $(-1)^k \leq 1$ in (\ref{7373}), we obtain
\begin{align}
\label{lskdlsdksldks}
P_O(U)  & \leq  1 - \sum_{i=1}^n \int w(x,u_i) f(x)\mathrm{d}x + \sum_{j=2}^n \binom{n}{j} e^{ -K_1 j h^{r'} + K_2 j(\tau^2 + h^2)^{r'/4} }.
\end{align}

Since $e^{ -K_1 j h^{r'} + K_2 j(\tau^2 + h^2)^{r'/4} } $ is a decreasing function of $j$, we obtain
\begin{align}
\label{lemmamain}
\sum_{j=2}^n \binom{n}{j} e^{-jK_1 h^{r'}} I_0^{j}(K_2 (\tau^2 + h^2)^{\frac{r'}{4} } \leq 2^{n}  e^{ -2K_1 h^{r'} + 2K_2(\tau^2 + h^2)^{r'/4} } 
 \end{align}
Substituting to (\ref{lskdlsdksldks}) and taking the minimum over all deployments, we have
\begin{align}
\label{alskdjalksdjazz}
P_O(U^{\star})  \leq 1 - n \int w(x,u^{\star}(h)) f(x)\mathrm{d}x + 2^{n}  e^{ -2K_1 h^{r'} + 2K_2(\tau^2 + h^2)^{r'/4} } .
\end{align}
A similar argument results in the converse estimate
\begin{align}
\label{alskdjalksdjazzconv}
P_O(U^{\star})  \geq  1 - n \int w(x,u^{\star}(h)) f(x)\mathrm{d}x - 2^{n}  e^{ -2K_1 h^{r'} + 2K_2(\tau^2 + h^2)^{r'/4} } .
\end{align}
The statement of the theorem follows from (\ref{alskdjalksdjazz}), (\ref{alskdjalksdjazzconv}), and the dominated convergence theorem.

\linespread{1.8}
\bibliography{main} 

\begin{thebibliography}{10}
\providecommand{\url}[1]{#1}
\csname url@samestyle\endcsname
\providecommand{\newblock}{\relax}
\providecommand{\bibinfo}[2]{#2}
\providecommand{\BIBentrySTDinterwordspacing}{\spaceskip=0pt\relax}
\providecommand{\BIBentryALTinterwordstretchfactor}{4}
\providecommand{\BIBentryALTinterwordspacing}{\spaceskip=\fontdimen2\font plus
\BIBentryALTinterwordstretchfactor\fontdimen3\font minus
  \fontdimen4\font\relax}
\providecommand{\BIBforeignlanguage}[2]{{%
\expandafter\ifx\csname l@#1\endcsname\relax
\typeout{** WARNING: IEEEtran.bst: No hyphenation pattern has been}%
\typeout{** loaded for the language `#1'. Using the pattern for}%
\typeout{** the default language instead.}%
\else
\language=\csname l@#1\endcsname
\fi
#2}}
\providecommand{\BIBdecl}{\relax}
\BIBdecl

\bibitem{koyuncupimrc19}
M.~Shabanighazikelayeh and E.~Koyuncu, ``Outage-optimized deployment of
  {UAV}s,'' in \emph{IEEE International Symposium on Personal, Indoor and
  Mobile Radio Communications (PIMRC)}, Sept. 2019.

\bibitem{zeng2016wireless}
Y.~Zeng, R.~Zhang, and T.~J. Lim, ``Wireless communications with unmanned
  aerial vehicles: {O}pportunities and challenges,'' \emph{IEEE Communications
  Magazine}, vol.~54, no.~5, pp. 36--42, May 2016.

\bibitem{bor2016efficient}
R.~I. {Bor-Yaliniz}, A.~{El-Keyi}, and H.~{Yanikomeroglu}, ``Efficient 3-d
  placement of an aerial base station in next generation cellular networks,''
  in \emph{IEEE International Conference on Communications (ICC)}, May 2016.

\bibitem{lyu1}
J.~{Lyu}, Y.~{Zeng}, R.~{Zhang}, and T.~J. {Lim}, ``Placement optimization of
  {UAV}-mounted mobile base stations,'' \emph{IEEE Communications Letters},
  vol.~21, no.~3, pp. 604--607, Mar. 2017.

\bibitem{uavrelay1}
W.~Zhao, M.~Ammar, and E.~Zegura, ``A message ferrying approach for data
  delivery in sparse mobile ad hoc networks,'' in \emph{ACM International
  Symposium on Mobile Ad Hoc Networking and Computing (MobiHoc)}, May 2014.

\bibitem{zhan2011wireless}
P.~Zhan, K.~Yu, and A.~L. Swindlehurst, ``Wireless relay communications with
  unmanned aerial vehicles: Performance and optimization,'' \emph{IEEE
  Transactions on Aerospace and Electronic Systems}, vol.~47, no.~3, pp.
  2068--2085, July 2011.

\bibitem{zeng2016throughput}
Y.~{Zeng}, R.~{Zhang}, and T.~J. {Lim}, ``Throughput maximization for
  {UAV}-enabled mobile relaying systems,'' \emph{IEEE Transactions on
  Communications}, vol.~64, no.~12, pp. 4983--4996, Dec. 2016.

\bibitem{8580994}
J.~{Kakar}, A.~{Chaaban}, V.~{Marojevic}, and A.~{Sezgin}, ``{UAV}-aided
  multi-way communications,'' in \emph{IEEE International Symposium on
  Personal, Indoor and Mobile Radio Communications (PIMRC)}, Sep. 2018.

\bibitem{pearre1}
B.~{Pearre} and T.~X. {Brown}, ``Model-free trajectory optimization for
  wireless data ferries among multiple sources,'' in \emph{IEEE Global
  Communications Conference (GLOBECOM)}, Dec. 2010.

\bibitem{cyclical}
J.~{Lyu}, Y.~{Zeng}, and R.~{Zhang}, ``Cyclical multiple access in {UAV}-aided
  communications: A throughput-delay tradeoff,'' \emph{IEEE Wireless
  Communications Letters}, vol.~5, no.~6, pp. 600--603, Dec. 2016.

\bibitem{myIoT}
\BIBentryALTinterwordspacing
M.~Shabanighazikelayeh and E.~Koyuncu, ``Optimal uav deployment for rate
  maximization in iot networks,'' 2020. [Online]. Available:
  \url{https://arXiv:2004.08508}
\BIBentrySTDinterwordspacing

\bibitem{ouruav1}
E.~Koyuncu, R.~Khodabakhsh, N.~Surya, and H.~Seferoglu, ``Deployment and
  trajectory optimization for {UAVs: A} quantization theory approach,'' in
  \emph{IEEE Wireless Communications and Networking Conference (WCNC)}, Apr.
  2018.

\bibitem{ouruav2}
E.~{Koyuncu}, M.~{Shabanighazikelayeh}, and H.~{Seferoglu}, ``Deployment and
  trajectory optimization of {UAVs}: A quantization theory approach,''
  \emph{IEEE Transactions on Wireless Communications}, vol.~17, no.~12, pp.
  8531--8546, Dec. 2018.

\bibitem{mozaffari1}
M.~{Mozaffari}, W.~{Saad}, M.~{Bennis}, and M.~{Debbah}, ``Wireless
  communication using unmanned aerial vehicles ({UAV}s): Optimal transport
  theory for hover time optimization,'' \emph{IEEE Transactions on Wireless
  Communications}, vol.~16, no.~12, pp. 8052--8066, Dec. 2017.

\bibitem{mozaffari2}
M.~Mozaffari, W.~Saad, M.~Bennis, and M.~Debbah, ``Efficient deployment of
  multiple unmanned aerial vehicles for optimal wireless coverage,'' \emph{IEEE
  Communications Letters}, vol.~20, no.~8, pp. 1647--1650, Aug. 2016.

\bibitem{shak}
H.~{Shakhatreh}, A.~{Khreishah}, and B.~{Ji}, ``Providing wireless coverage to
  high-rise buildings using {UAV}s,'' in \emph{IEEE International Conference on
  Communications (ICC)}, May 2017.

\bibitem{zeng2017energy}
Y.~Zeng and R.~Zhang, ``Energy-efficient {UAV} communication with trajectory
  optimization,'' \emph{IEEE Transactions on Wireless Communications}, vol.~16,
  no.~6, pp. 3747--3760, June 2017.

\bibitem{xuhind}
Y.~Xu, L.~Xiao, D.~Yang, L.~Cuthbert, and Y.~Wang, ``Energy-efficient {UAV}
  communication with multiple {GT}s based on trajectory optimization,''
  \emph{Mobile Information Systems}, 2018, Article ID: 5629573.

\bibitem{anotherzhangpap}
\BIBentryALTinterwordspacing
C.~Shen, T.-H. Chang, J.~Gong, Y.~Zeng, and R.~Zhang, ``Multi-{UAV}
  interference coordination via joint trajectory and power control,'' Sep.
  2018. [Online]. Available: \url{https://arxiv.org/abs/1809.05697}
\BIBentrySTDinterwordspacing

\bibitem{caisecure}
\BIBentryALTinterwordspacing
Y.~Cai, Z.~Wei, R.~Li, D.~W.~K. Ng, and J.~Yuan, ``Energy-efficient resource
  allocation for secure {UAV} communication systems,'' Jan. 2019. [Online].
  Available: \url{https://arxiv.org/abs/1901.09308}
\BIBentrySTDinterwordspacing

\bibitem{koyuncuspawc}
E.~Koyuncu, ``Power-efficient deployment of {UAV}s as relays,'' in \emph{IEEE
  Signal Processing Advances in Wireless Communications (SPAWC)}, June 2018.

\bibitem{aydin}
\BIBentryALTinterwordspacing
S.~Roth, A.~Kariminezhad, and A.~Sezgin, ``Base-stations up in the air:
  Multi-{UAV} trajectory control for min-rate maximization in uplink {C-RAN},''
  Nov. 2018. [Online]. Available: \url{https://arxiv.org/abs/1811.10585}
\BIBentrySTDinterwordspacing

\bibitem{8675440}
L.~{Liu}, S.~{Zhang}, and R.~{Zhang}, ``{CoMP} in the sky: {UAV} placement and
  movement optimization for multi-user communications,'' \emph{to appear in
  IEEE Transactions on Communications}, 2019.

\bibitem{merwaday2015uav}
A.~Merwaday and I.~Guvenc, ``Uav assisted heterogeneous networks for public
  safety communications,'' in \emph{IEEE Wireless Communications and Networking
  Conference (WCNC)}, May 2015.

\bibitem{sncdrones}
I.~{Bor-Yaliniz}, A.~{El-Keyi}, and H.~{Yanikomeroglu}, ``Spatial configuration
  of agile wireless networks with drone-{BSs} and user-in-the-loop,''
  \emph{IEEE Transactions on Wireless Communications}, vol.~18, no.~2, pp.
  753--768, Feb 2019.

\bibitem{han2009optimization}
Z.~Han, A.~L. Swindlehurst, and K.~R. Liu, ``Optimization of manet connectivity
  via smart deployment/movement of unmanned air vehicles,'' \emph{IEEE
  Transactions on Vehicular Technology}, vol.~58, no.~7, pp. 3533--3546, Jul.
  2009.

\bibitem{shakhatreh2019unmanned}
H.~Shakhatreh, A.~H. Sawalmeh, A.~Al-Fuqaha, Z.~Dou, E.~Almaita, I.~Khalil,
  N.~S. Othman, A.~Khreishah, and M.~Guizani, ``Unmanned aerial vehicles
  ({UAV}s): A survey on civil applications and key research challenges,''
  \emph{IEEE Access}, vol.~7, pp. 48\,572--48\,634, 2019.

\bibitem{4109829}
I.~Y. {Abualhaol} and M.~M. {Matalgah}, ``Outage probability analysis in a
  cooperative {UAV}s network over {Nakagami}-m fading channels,'' in \emph{IEEE
  Vehicular Technology Conference (VTC)}, Sep. 2006.

\bibitem{8068199}
S.~{Zhang}, H.~{Zhang}, Q.~{He}, K.~{Bian}, and L.~{Song}, ``Joint trajectory
  and power optimization for {UAV} relay networks,'' \emph{IEEE Communications
  Letters}, vol.~22, no.~1, pp. 161--164, Jan. 2018.

\bibitem{7412759}
M.~{Mozaffari}, W.~{Saad}, M.~{Bennis}, and M.~{Debbah}, ``Unmanned aerial
  vehicle with underlaid device-to-device communications: Performance and
  tradeoffs,'' \emph{IEEE Transactions on Wireless Communications}, vol.~15,
  no.~6, pp. 3949--3963, June 2016.

\bibitem{zhang2019spectrum}
C.~Zhang, Z.~Wei, Z.~Feng, and W.~Zhang, ``Spectrum sharing of drone
  networks,'' \emph{Handbook of Cognitive Radio}, pp. 1279--1304, 2019.

\bibitem{Azari2018}
M.~Azari, F.~Rosas, K.~Chen, and S.~Pollin, ``Ultra reliable uav communication
  using altitude and cooperation diversity,'' \emph{IEEE Transactions on
  Communications}, vol.~66, no.~1, pp. 330 -- 344, 2018.

\bibitem{psomethod}
J.~Kennedy, ``Particle swarm optimization,'' \emph{Encyclopedia of Machine
  Learning}, 2011.

\bibitem{shakhatreh2017efficient}
H.~Shakhatreh, A.~Khreishah, A.~Alsarhan, I.~Khalil, A.~Sawalmeh, and N.~S.
  Othman, ``Efficient 3d placement of a {UAV} using particle swarm
  optimization,'' in \emph{International Conference on Information and
  Communication Systems (ICICS)}, Apr. 2017.

\bibitem{kalantari2016number}
E.~Kalantari, H.~Yanikomeroglu, and A.~Yongacoglu, ``On the number and 3d
  placement of drone base stations in wireless cellular networks,'' in
  \emph{IEEE Vehicular Technology Conference (VTC-Fall)}, Sep. 2016.

\bibitem{cortes1}
J.~Cort\'{e}s, S.~Mart\'{i}nez, T.~Karata\c{s}, and F.~Bullo, ``Coverage
  control for mobile sensing networks,'' \emph{IEEE Transactions on Robotics
  and Automation}, vol.~20, no.~2, pp. 243--255, Apr. 2004.

\bibitem{cortes2005spatially}
J.~Cort\'{e}s, S.~Mart\'{i}nez, and F.~Bullo, ``Spatially-distributed coverage
  optimization and control with limited-range interactions,'' \emph{ESAIM:
  Control, Optimisation and Calculus of Variations}, vol.~11, no.~4, pp.
  691--719, Oct. 2005.

\bibitem{gusrialdi2008coverage}
A.~Gusrialdi, T.~Hatanaka, and M.~Fujita, ``Coverage control for mobile
  networks with limited-range anisotropic sensors,'' in \emph{IEEE Conference
  on Decision and Control}, Dec. 2008.

\bibitem{guojafsensorlim}
\BIBentryALTinterwordspacing
J.~{Guo} and H.~{Jafarkhani}, ``Movement-efficient sensor deployment in
  wireless sensor networks with limited communication range,'' \emph{to appear
  in IEEE Transactions on Wireless Communications}, 2019. [Online]. Available:
  \url{https://arxiv.org/abs/1806.07334}
\BIBentrySTDinterwordspacing

\bibitem{8302930}
J.~{Guo}, E.~{Koyuncu}, and H.~{Jafarkhani}, ``A source coding perspective on
  node deployment in two-tier networks,'' \emph{IEEE Transactions on
  Communications}, vol.~66, no.~7, pp. 3035--3049, July 2018.

\bibitem{7951029}
E.~{Koyuncu} and H.~{Jafarkhani}, ``On the minimum average distortion of
  quantizers with index-dependent distortion measures,'' \emph{IEEE
  Transactions on Signal Processing}, vol.~65, no.~17, pp. 4655--4669, Sept.
  2017.

\bibitem{cass1}
W.~Li and C.~G. {Cassandras}, ``Distributed cooperative coverage control of
  sensor networks,'' in \emph{IEEE Conference on Decision and Control}, Dec.
  2005.

\bibitem{cass2}
M.~{Zhong} and C.~G. {Cassandras}, ``Distributed coverage control and data
  collection with mobile sensor networks,'' \emph{IEEE Transactions on
  Automatic Control}, vol.~56, no.~10, pp. 2445--2455, Oct. 2011.

\bibitem{415987}
\BIBentryALTinterwordspacing
A.~V. (https://math.stackexchange.com/users/5531/antonio vargas), ``Bounding
  the modified bessel function of the first kind,'' Mathematics Stack Exchange,
  uRL:https://math.stackexchange.com/q/415987 (version: 2013-06-09). [Online].
  Available: \url{https://math.stackexchange.com/q/415987}
\BIBentrySTDinterwordspacing

\bibitem{walsh1972}
Y.~L. Euke, ``Inequalities for generalized hypergeometric functions,''
  \emph{Journal of Approximation Theory}, vol.~5, no.~1, pp. 41 -- 65, 1972.

\end{thebibliography}
\bibliographystyle{IEEEtran}

\end{document}